\providecommand{\U}[1]{\protect\rule{.1in}{.1in}}
\newtheorem{theorem}{Theorem}
\newtheorem{acknowledgement}[theorem]{Acknowledgement}
\newtheorem{corollary}[theorem]{Corollary}
\newtheorem{definition}[theorem]{Definition}
\newtheorem{example}[theorem]{Example}
\newtheorem{lemma}[theorem]{Lemma}
\newtheorem{notation}[theorem]{Notation}
\newtheorem{proposition}[theorem]{Proposition}
\newenvironment{proof}[1][Proof]{\noindent\textbf{#1.} }{\ \rule{0.5em}{0.5em}}
\begin{document}

\title{A Phase Space Representation of the Metaplectic Group}
\author{Maurice de Gosson\thanks{maurice.de.gosson@univie.ac.at}\\University of Vienna\\Faculty of Mathematics (NuHAG)\\1090 Vienna, AUSTRIA\\and\\Austrian Academy of Sciences\\Acoustics Research Institute\\1010, Vienna, AUSTRIA}
\maketitle

\begin{abstract}
The symplectic group $\operatorname*{Sp}(n)$ acts on phase space
$\mathbb{R}_{x}^{n}\times\mathbb{R}_{p}^{n}$ while the unitary representation
of its double cover, $\operatorname*{Mp}(n)$, the metaplectic group, acts on
functions defined on $\mathbb{R}_{x}^{n}$. We will construct an extension
$\widetilde{\operatorname{Mp}}(n)$ of $\operatorname*{Mp}(n)$ acting on square
integrable functions on $\mathbb{R}_{x}^{n}\times\mathbb{R}_{p}^{n}$. This is
performed using previous results of ours involving explicit expressions of the
twisted Weyl symbols of metaplectic operators and Bopp pseudodifferential
operators, which are phase space extensions of the usual Weyl operators..

\end{abstract}

\section{Introduction}

The metaplectic group $\operatorname*{Mp}(n)$ has a long history, staring with
Andr\'{e} Weil's study in number theory and has benefitted from the
contribution of many authors (Buslaev \cite{Buslaev}, Shale \cite{Shale},
Leray \cite{Leray}, Reiter \cite{Reiter}, Maslov \cite{Maslov}, to cite a
few). While the major use of $\operatorname*{Mp}(n)$ in quantization problems,
there has recently been a resurgence of interest of this topic in
time-frequency analysis and.

Technically speaking, the metaplectic group is a connected group of unitary
operators acting on the Hilbert space of square integrable functions
satisfying the exactness of the sequence%
\[
0\longrightarrow\{\pm I\}\longrightarrow\operatorname*{Mp}(n)\longrightarrow
\operatorname*{Sp}(n)\longrightarrow0
\]
where $\operatorname*{Sp}(n)$ is the symplectic group: $\operatorname*{Mp}(n)$
is thus a double cover of $\operatorname*{Sp}(n)$. While the latter acts on
phase space $\mathbb{R}_{z}^{2n}\equiv\mathbb{R}_{x}^{n}\times\mathbb{R}%
_{p}^{n}$(or, in TFA, on the time-frequency domain),while $\operatorname*{Mp}%
(n)$ acts on functions defined on. 

In the present work we define and study an extension of metaplectic operators
acting unitarily on , and forming a group $\widetilde{\operatorname{Mp}}(n)$.
Thus, $\widetilde{\operatorname{Mp}}(n)$ that is on functions defined on the
symplectic phase space itself thus restoring a certain symmetry in the action
domains of both groups. This will be achieved by using results from \cite{LMP}
where we studied the Weyl symbols of metaplectic operators, and our
construction of a phase space pseudo-differntial calculus ("Bopp calculus")
based on the Bopp shifts \cite{Bopp} initiated in \cite{Birkbis,CPDE}.

\begin{notation}
The phase space $\mathbb{R}_{z}^{2n}\equiv\mathbb{R}_{x}^{n}\times
\mathbb{R}_{p}^{n}$ is equipped with the standard symplectic form
$\sigma(z,z^{\prime})=Jz\cdot z^{\prime}$, $J=%
\begin{pmatrix}
0 & I\\
-I & 0
\end{pmatrix}
$. The standard symplectic group $\operatorname*{Sp}(n)$ is the group of all
automorphisms $S$ of $\mathbb{R}_{z}^{2n}$ such that $SJS^{T}=S^{T}JS=J$.
\end{notation}

\section{Metaplectic Operators and their Weyl symbols}

\subsection{Definition by quadratic Fourier transforms}

For related studies and details see\cite{Folland,Birk,Leray}.

We are following here Let $W=(P,L;Q)$ be a quadratic form on $\mathbb{R}%
_{x}^{n}\times\mathbb{R}_{x}^{n}$ of the type
\begin{equation}
W(x,x^{\prime})=\tfrac{1}{2}Px\cdot x-Lx\cdot x^{\prime}+\tfrac{1}%
{2}Qx^{\prime}\cdot x^{\prime}\nonumber
\end{equation}
\textit{with} \ $P=P^{T}$ , \ $Q=Q^{T}$ , \textit{and }\ $\det L\neq0$. We
will call such a quadratic form a generating function because of the following
property. each such $Q$ determines a unique $S_{W}\in\operatorname*{Sp}(n)$
such that
\[
(x,p)=S_{W}(x^{\prime},p^{\prime})\Longleftrightarrow p=\partial
_{x}W(x,x^{\prime})\text{ and }p^{\prime}=-\partial_{x^{\prime}}W(x,x^{\prime
});
\]
a straightforward calculation shows that
\begin{equation}
S_{W}=%
\begin{pmatrix}
L^{-1}Q\smallskip & L^{-1}\\
PL^{-1}Q-L^{T} & L^{-1}P
\end{pmatrix}
_{\text{.}} \label{swlq}%
\end{equation}
Observe that $\det L^{-1}\neq0$ (such a symplectic matrix is said to be
"free"). Conversely, every free symplectic matrix%
\[
S=%
\begin{pmatrix}
A\smallskip & B\\
C & D
\end{pmatrix}
\text{ \ ., \ }\det B\neq0
\]
corresponds a unique generating function, namely%

\begin{equation}
W(x,x^{\prime})=\tfrac{1}{2}DB^{-1}x\cdot x-B^{-1}x\cdot x^{\prime}+\tfrac
{1}{2}B^{-1}x^{\prime}\cdot x^{\prime}.\nonumber
\end{equation}

\begin{definition}
The metaplectic group $\operatorname*{Mp}(n)$ is the group of unitary
operators in $L^{2}(\mathbb{R}_{x}^{n}))$ generated by the quadratic Fourier
integral operators%
\begin{equation}
\widehat{S}_{W,m}f(x)=\left(  \tfrac{1}{2\pi i\hbar}\right)  ^{n/2}%
\Delta(W)\int_{\mathbb{R}^{n}}e^{\frac{i}{\hbar}W(x,x^{\prime})}f(x^{\prime
})dx^{\prime}\text{;} \label{swm1}%
\end{equation}
where
\begin{equation}
\Delta(W)=i^{m}\sqrt{|\det L|} \label{swm2}%
\end{equation}
the integer $m$ )"Maslov index") corresponding to a choice of $\arg\det L$:%
\begin{equation}
m\pi\equiv\arg\det L\text{ \ \ }\operatorname{mod}2\pi\text{.} \label{swm3}%
\end{equation}

\end{definition}

\begin{example}
\label{ex1}Let $S_{\alpha}=%
\begin{pmatrix}
\cos\alpha\smallskip & \sin\alpha\\
-\cos\alpha & \sin\alpha
\end{pmatrix}
$, $\alpha\notin\pi\mathbb{Z}$. The generating function is
\[
W_{\alpha}(x,xx^{\prime})=\frac{1}{2\sin\alpha}(\cos\alpha(x^{2}+x^{\prime
2})-2xx^{\prime})
\]
and hence%
\[
\widehat{S}_{W_{\alpha},m}f(x)=\left(  \tfrac{1}{2\pi i\hbar}\right)
^{1/2}i^{[\alpha/\pi]}\sqrt{\frac{1}{2\sin\alpha}}\int_{\mathbb{R}^{n}%
}e^{\frac{i}{\hbar}W_{\alpha}(x,x^{\prime})}f(x^{\prime})dx^{\prime}\text{.}%
\]

\end{example}

The following properties of metaplectic operators are essential:

\begin{theorem}
\label{Thm1}(i) Every $\widehat{S}\in\operatorname{Mp}(n)$ is the product of
exactly two quadratic Fourier integral operators: $\widehat{S}=\widehat{S}%
_{W,m}\widehat{S}_{W,^{\prime}m^{\prime}}$ and (ii) the natural projection
$\pi^{\operatorname{Mp}}:\operatorname{Mp}(n)\longrightarrow\operatorname{Sp}%
(n)$ is defined \ by
\begin{equation}
\pi^{\operatorname{Mp}}(\widehat{S}_{W,m})=S_{W}\text{ \ },\text{ \ }%
\pi^{\operatorname{Mp}}(\widehat{S})=S_{W}S_{W^{\prime}}; \label{pisw1}%
\end{equation}
(iii) The inverse of $\widehat{S}_{W,m}$ is
\begin{equation}
\widehat{S}_{W,m}^{-1}=\widehat{S}_{W^{\prime},m^{\prime}}\text{ \ ,
\ }W^{\prime}(x,x^{\prime})=-W(x^{\prime},x)\text{ , }m^{\prime}=n-m;
\label{swinv}%
\end{equation}
(iv) If $\widehat{S}_{W,m})=\widehat{S}_{W^{\prime},m^{\prime}}\widehat{S}%
^{\prime\prime}$ with $W=(P^{\prime},L^{\prime},Q^{\prime})$, $W^{\prime
\prime}=(P^{\prime\prime},L^{\prime\prime},Q^{\prime\prime})$ then
\[
m=m^{\prime}+m^{\prime\prime}-\operatorname*{Inert}("+Q^{\prime})\text{
\ }\operatorname{mod}2\pi
\]
where $\operatorname*{Inert}(P"+Q^{\prime})$ is the index of inertia of the
symmetric matrix $P"+Q^{\prime}$.
\end{theorem}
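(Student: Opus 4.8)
The logical heart of the theorem is the composition rule behind (iv); I would prove that first and deduce (i)--(iii) from it. Concretely, the plan is to establish the following: \emph{if $\widehat{S}_{W',m'}$ and $\widehat{S}_{W'',m''}$ are quadratic Fourier integral operators with $W'=(P',L',Q')$, $W''=(P'',L'',Q'')$ and the symmetric matrix $P''+Q'$ is invertible, then $\widehat{S}_{W',m'}\widehat{S}_{W'',m''}=\widehat{S}_{W,m}$, where $W$ is the generating function of $S_{W'}S_{W''}$ and $m\equiv m'+m''-\operatorname*{Inert}(P''+Q')\pmod 4$.} Statement (iv) is exactly this identity, so proving it disposes of (iv) simultaneously.

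To prove the composition rule, I would write $\widehat{S}_{W',m'}\widehat{S}_{W'',m''}f(x)$ as an iterated integral over the intermediate variable (call it $y$) and over $x'$, and, by Fubini, carry out the $y$-integration first. The phase $W'(x,y)+W''(y,x')$ is a genuine quadratic polynomial in $y$ whose second-order part is $\tfrac12(P''+Q')y\cdot y$, so the $y$-integral is an exact Fresnel--Gauss integral: its modulus yields $|\det(P''+Q')|^{-1/2}$, its phase yields $e^{i\pi\operatorname{sgn}(P''+Q')/4}$, and completing the square exhibits the critical value of $W'(x,y)+W''(y,x')$ in $y$ as a quadratic form in $(x,x')$, which one identifies via the explicit formula (\ref{swlq}) with the generating function $W$ of $S_{W'}S_{W''}$ (in particular $L=L''(P''+Q')^{-1}L'$, whence $|\det L|=|\det L'|\,|\det L''|/|\det(P''+Q')|$). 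It then remains to collect the constants: the two prefactors $(2\pi i\hbar)^{-n/2}$, the Maslov phases $i^{m'},i^{m''}$, the factors $\sqrt{|\det L'|},\sqrt{|\det L''|}$, and the output of the Fresnel integral. Using $\operatorname{sgn}(M)=n-2\operatorname*{Inert}(M)$ to rewrite $e^{i\pi\operatorname{sgn}(P''+Q')/4}$ as $i^{n/2}i^{-\operatorname*{Inert}(P''+Q')}$, the spare $i^{n/2}$ recombines with one $(2\pi i\hbar)^{-n/2}$ and with $(2\pi\hbar)^{n/2}$ to restore a single normalization factor $(2\pi i\hbar)^{-n/2}$, and the surviving powers of $i$ add up to $m'+m''-\operatorname*{Inert}(P''+Q')$. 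This bookkeeping of powers of $i$ and of signs is, I expect, the main obstacle: it is precisely where phase errors hide, and it is nothing other than the Maslov cocycle identity.

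With the composition rule available, the rest is short. For (iii): metaplectic operators are unitary, so $\widehat{S}_{W,m}^{-1}=\widehat{S}_{W,m}^{*}$, whose integral kernel is the conjugate transpose $\overline{K_{W,m}(x',x)}$ of the kernel of $\widehat{S}_{W,m}$; reading this off gives $e^{-\frac{i}{\hbar}W(x',x)}=e^{\frac{i}{\hbar}W'(x,x')}$ with $W'(x,x')=-W(x',x)$ (so that $L$ is replaced by $-L^{T}$, of equal determinant modulus), and comparison of $\overline{(2\pi i\hbar)^{-n/2}i^{m}}$ with $(2\pi i\hbar)^{-n/2}i^{m'}$ forces $m'\equiv n-m$. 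For (ii): the composition rule shows that $\widehat{S}_{W_1,m_1}\widehat{S}_{W_2,m_2}\mapsto S_{W_1}S_{W_2}$ is well defined whenever $S_{W_1}S_{W_2}$ is free and is multiplicative, and --- once (i) is proved --- this extends to a group homomorphism $\pi^{\operatorname{Mp}}$ with $\pi^{\operatorname{Mp}}(\widehat{S}_{W,m})=S_W$; it is onto because the free symplectic matrices generate $\operatorname{Sp}(n)$, and its kernel is $\{\pm I\}$, since $S_{W_1}S_{W_2}=I$ forces $\widehat{S}_{W_2,m_2}=\pm\widehat{S}_{W_1,m_1}^{-1}$ by (iii), hence $\widehat{S}_{W_1,m_1}\widehat{S}_{W_2,m_2}=\pm I$, and both signs are realized because replacing $m_1$ by $m_1+2$ multiplies the operator by $-1$ without altering its projection.

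Finally (i). First I would record the purely symplectic lemma that every $S\in\operatorname{Sp}(n)$ is a product of two free symplectic matrices: for a free $S_1$ in general position the block decomposition shows $S_1^{-1}S$ is again free, the exceptional $S_1$ forming a proper algebraic subset. Since $\operatorname{Mp}(n)$ is by definition generated by the operators $\widehat{S}_{W,m}$, an arbitrary element is a finite product of them; a product of $k\geq 3$ such operators can be shortened to $k-1$ by fusing two adjacent factors via the composition rule --- directly when the corresponding product of symplectic matrices is free, and otherwise after inserting $\widehat{S}_{W_0,m_0}^{-1}\widehat{S}_{W_0,m_0}=I$ with $S_{W_0}$ chosen in general position so that the relevant products become free. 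Induction then yields that every element of $\operatorname{Mp}(n)$ is a product of two quadratic Fourier integral operators.
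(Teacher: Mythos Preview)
The paper does not prove Theorem~\ref{Thm1}: its ``proof'' is a reference to \cite{Birk}, \cite{Mpn}, \cite{Leray}. Your proposal is therefore not competing with an argument in the paper, and in fact what you outline \emph{is} essentially the classical proof one finds in those references: the composition formula (iv) via an exact Fresnel integral in the intermediate variable (with the Maslov--index bookkeeping coming from $\operatorname{sgn}(P''+Q')=n-2\operatorname*{Inert}(P''+Q')$), the inverse (iii) via the adjoint kernel, and (i) via a genericity/shortening argument on products of free symplectic matrices. The computations you sketch (in particular $L=L''(P''+Q')^{-1}L'$ and the resulting determinant identity) are correct.

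One point deserves tightening. In (ii) you define $\pi^{\operatorname{Mp}}$ on generators and say that ``once (i) is proved this extends to a group homomorphism''. For this you need that the assignment $\widehat{S}\mapsto S$ is \emph{well defined} on $\operatorname{Mp}(n)$, i.e.\ independent of the chosen factorization $\widehat{S}=\widehat{S}_{W,m}\widehat{S}_{W',m'}$; your composition rule alone only checks consistency when the product $S_WS_{W'}$ is free. The cleanest way to close this (and the one implicit in the references) is to first verify the covariance relation $\widehat{S}_{W,m}\,\widehat{T}(z)\,\widehat{S}_{W,m}^{-1}=\widehat{T}(S_Wz)$, which then gives, for any word $\widehat{S}$ in the generators, $\widehat{S}\,\widehat{T}(z)\,\widehat{S}^{-1}=\widehat{T}(Sz)$; since the operators $\widehat{T}(z)$ determine $z$, this recovers $S$ intrinsically from $\widehat{S}$. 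With that remark added, your plan is complete and matches the standard treatment.
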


\begin{proof}
See \cite{Birk}, Ch.7, \cite{Mpn}, \cite{Leray}, Ch.1.
\end{proof}

Some readers, in particular those coming from the time-frequency community,
might be more familiar with the following presentation of the metaplectic
group: we note that each $\widehat{S}_{W,m}$ can be factored as a product of
three types of elementary operator, namely%
\begin{equation}
\widehat{S}_{W,m}=\widehat{V}_{-P}\widehat{M}_{L,m}\widehat{J}\widehat{V}_{-Q}
\label{swplq}%
\end{equation}
where
\begin{equation}
\widehat{V}_{-P}f(x)=e^{\tfrac{i}{2\hbar}\langle Px,x\rangle}f(x)\text{ \ ,
\ }\widehat{M}_{L,m}f(x)=i^{m}\sqrt{|\det L|}f(Lx) \label{vpmlm}%
\end{equation}
and $\widehat{J}=\widehat{S}_{(0,I,0),0}$ which is basically the Fourier
transform:
\begin{equation}
\widehat{J}f(x)=\left(  \tfrac{1}{2\pi i\hbar}\right)  ^{n/2}\int%
_{\mathbb{R}^{n}}e^{-\frac{i}{\wp}\langle x,x^{\prime}\rangle}f(x^{\prime
})dx^{\prime}=i^{-n/2}Ff(x). \label{jichap}%
\end{equation}
It follows that the set of all operators $\widehat{V}_{-P}$, $\widehat{M}%
_{L,m}$ together with $\widehat{J}$ generate $\operatorname{Mp}(n).$ The
projections of these operators are
\[
\pi^{\operatorname{Mp}}(\widehat{V}_{-P})=V_{-P}=%
\begin{pmatrix}
I\smallskip & 0\\
P & I
\end{pmatrix}
\text{ },\text{ }\pi_{-P}^{\operatorname{Mp}}\text{\ }\widehat{M}_{L,m}%
)=M_{L}=%
\begin{pmatrix}
L^{-1}\smallskip & 0\\
0 & L^{T}%
\end{pmatrix}
\]
and $\pi_{-P}^{\operatorname{Mp}}(\widehat{J})=J=%
\begin{pmatrix}
0\smallskip & I\\
-I & 0
\end{pmatrix}
.$

\subsection{The twisted Weyl symbol of $\protect\widehat{S}_{W,m}$}

Let $\widehat{A}$ be a continuous operator $\mathcal{S}(\mathbb{R}_{x}%
^{n})\longrightarrow\mathcal{S}^{\prime}(\mathbb{R}_{x}^{n})$; in view of
Schwartz's kernel theorem there exists $K\in\mathcal{S}^{\prime}%
(\mathbb{R}_{x}^{n}\times\mathbb{R}_{x}^{n})$ such that%
\[
\widehat{A}f(x)=\int_{\mathbb{R}^{n}}K(x,y)f(y)dy
\]
(the integral being viewed in the sense of distributions). By definition, the
Weyl symbol of $\widehat{A}$ is the distribution $a$ defined by the Fourier
inversion formula%
\begin{equation}
a(x,p)=\int_{\mathbb{R}^{n}}e^{-\frac{i}{\hbar}p\cdot y}K(x+\tfrac{1}%
{2}y,x-\tfrac{1}{2}y)dy. \label{finv2}%
\end{equation}
Inverting this formula yields
\begin{equation}
K(x,y)=\left(  \tfrac{1}{2\pi\hbar}\right)  ^{n}\int_{\mathbb{R}^{n}}%
e^{\frac{i}{\hbar}p\cdot(x-y)}a(\tfrac{1}{2}(x+y),p)dp \label{KA9}%
\end{equation}
hence the familiar expression
\begin{equation}
\widehat{A}f(x)=\left(  \tfrac{1}{2\pi\hbar}\right)  ^{n}\int_{\mathbb{R}%
^{2n}}e^{\frac{i}{\hbar}p\cdot(x-y)}a(\tfrac{1}{2}(x+y),p)f(y)dydp
\label{apsigood}%
\end{equation}
valid under adequate conditions on $a$ and $f$.

There are several ways to express Weyl operators in integral form; in our
context it will be useful to use harmonic decomposition \cite{Birk,Horm}
\begin{equation}
\widehat{A}=\left(  \tfrac{1}{2\pi\hbar}\right)  ^{n}\int_{\mathbb{R}^{2n}%
}a_{\sigma}(z)\widehat{T}(z)z. \label{bochner}%
\end{equation}
Here $a_{\sigma}$ -- the \textit{twisted Weyl symbol} of $\widehat{A}$ -- is
the symplectic Fourier transform
\begin{equation}
a_{\sigma}(z)=F_{\sigma}a(z)=\left(  \tfrac{1}{2\pi\hbar}\right)  ^{n}%
\int_{\mathbb{R}^{n2}}e^{-\tfrac{i}{\hbar}\sigma(z,z^{\prime})}a(z^{\prime
})dz^{\prime} \label{SFT}%
\end{equation}
and $\widehat{T}(z_{0})$ the Heisenberg--Weyl outplacement operator:%
\begin{equation}
\widehat{T}(z_{a})f(x)=e^{\frac{i}{\hbar}p_{0}x-\tfrac{1}{2}p_{0}x_{0}%
,)}f(x-x_{0}) \label{HW}%
\end{equation}
which is a variant of the shift operator used in time-frequency analysis.

An operator $\widehat{S}\in\operatorname{Mp}(n)$ de facto satisfies the
conditions of Schwartz's kernel theorem and thus \textit{de facto} has a Weyl
symbol. To describe the latter we introduce the symplectic Cayley transform
\cite{LMP,Birk} of $S\in\operatorname*{Sp}(n)$ satisfying the condition
$\det(S-I)\neq0$ is invertible; it defined by%
\begin{equation}
M_{S}=\frac{1}{2}J(S+I)(S-I)^{-1}=\frac{1}{2}J+J(S-I)^{-1}. \label{cayley}%
\end{equation}
The symplectic Cayley transform is symmetric (\cite{Birk}, \S 4.3.2):
$M_{S}=M_{S}^{T}$ and we have the inversion formula%
\begin{equation}
M_{S^{-1}}=-M_{S} \label{invms}%
\end{equation}
and we have the inversion formula%
\begin{equation}
S=(M_{S}-\tfrac{1}{2}J)^{-1}(M_{S}+\tfrac{1}{2}J). \label{minv}%
\end{equation}

\begin{theorem}
\label{Thm2}Let $\widehat{S}=\widehat{S}_{W,m}$ be a quadratic Fourier
integral operator such that $\det(S_{W}-I)\neq0$. (i) The twisted the Weyl
symbol $a^{W,m}$ of is the function
\begin{equation}
a_{\sigma}^{W,m}(z)=\frac{i^{\nu}}{\sqrt{|\det(S_{W}-I)|}}e^{\frac{i}{2\hbar
}M_{W}z\cdot z} \label{asigmaw1}%
\end{equation}
where $M_{W}=M_{S_{W}}$ is the symplectic Cayley transform of $S_{W}$. (iii)
The integer $\nu$ is the Conley--Zehnder index%
\begin{equation}
\nu\equiv m-\operatorname*{Inert}W_{xx}\text{ \ }\operatorname{mod}4
\label{cz1}%
\end{equation}
where $\operatorname*{Inert}W_{xx}$ is the index of inertia of the Hessian of
the quadratic form $x\longmapsto W(x,x)W$; the Conley--Zehnder index
corresponds to a choice of $\arg\det(S-)$.%
\begin{equation}
\arg\det(S-)\equiv(-\nu+n)\pi.\text{ \ }\operatorname{mod}2\pi. \label{argdet}%
\end{equation}

\end{theorem}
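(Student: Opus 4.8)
The plan is to obtain $a_{\sigma}^{W,m}$ by a single Gaussian (Fresnel) integral performed directly on the integral kernel of $\widehat{S}_{W,m}$. Combining the Fourier inversion formula \eqref{finv2} with the symplectic Fourier transform \eqref{SFT}, one first records (by an elementary computation in which the $p'$–integration produces a delta) the convenient identity
\[
a_{\sigma}(x,p)=\int_{\mathbb{R}^{n}}e^{-\frac{i}{\hbar}p\cdot u}\,K\big(u+\tfrac{1}{2}x,u-\tfrac{1}{2}x\big)\,du,
\]
which is the same thing as $a_{\sigma}(z)=\operatorname{tr}\big(\widehat{A}\,\widehat{T}(-z)\big)$ read off from \eqref{bochner} together with the standard relations $\widehat{T}(z)\widehat{T}(z')=e^{\frac{i}{2\hbar}\sigma(z,z')}\widehat{T}(z+z')$ and $\operatorname{tr}\widehat{T}(z)=(2\pi\hbar)^{n}\delta(z)$. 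Since the kernel of $\widehat{S}_{W,m}$ is $K(x,x')=\big(\tfrac{1}{2\pi i\hbar}\big)^{n/2}\Delta(W)\,e^{\frac{i}{\hbar}W(x,x')}$, evaluating $a_{\sigma}^{W,m}$ at $z_{0}=(x_{0},p_{0})$ reduces to the oscillatory integral $\int_{\mathbb{R}^{n}}e^{\frac{i}{\hbar}\Phi(u)}\,du$ with $\Phi(u)=W\big(u+\tfrac12 x_{0},u-\tfrac12 x_{0}\big)-p_{0}\cdot u$.

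The next step is to expand $\Phi$ as a quadratic polynomial in $u$ with $z_{0}$ as a parameter. Using $P=P^{T}$ and $Q=Q^{T}$, a short computation gives $\Phi(u)=\tfrac12 Hu\cdot u+b\cdot u+c_{0}$ with
\[
H=P+Q-L-L^{T},\qquad b=Rx_{0}-p_{0},\quad R=\tfrac12(P-Q-L+L^{T}),\qquad c_{0}=\tfrac18(P+Q+L+L^{T})x_{0}\cdot x_{0};
\]
here $H$ is precisely the Hessian $W_{xx}$ of $x\mapsto W(x,x)$. A Schur-complement computation on $S_{W}-I$ starting from \eqref{swlq} shows that the Schur complement of the invertible corner $L^{-1}$ equals $H$, so that $\det(S_{W}-I)=(-1)^{n}\det H/\det L$; in particular the standing hypothesis $\det(S_{W}-I)\neq0$ is equivalent to $\det W_{xx}\neq0$, which is exactly what makes the Fresnel integral meaningful. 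Completing the square and using $\int_{\mathbb{R}^{n}}e^{\frac{i}{2\hbar}Hu\cdot u}\,du=(2\pi\hbar)^{n/2}|\det H|^{-1/2}e^{\frac{i\pi}{4}\operatorname{sgn}H}$, one obtains
\[
a_{\sigma}^{W,m}(z_{0})=\big(\tfrac{1}{2\pi i\hbar}\big)^{n/2}\Delta(W)\,(2\pi\hbar)^{n/2}\,|\det H|^{-1/2}\,e^{\frac{i\pi}{4}\operatorname{sgn}H}\;e^{\frac{i}{\hbar}\left(c_{0}-\frac12 H^{-1}b\cdot b\right)}.
\]

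To recognise the amplitude I would combine $\big(\tfrac{1}{2\pi i\hbar}\big)^{n/2}(2\pi\hbar)^{n/2}=i^{-n/2}$ with $\operatorname{sgn}H=n-2\operatorname{Inert}W_{xx}$ (so that $e^{\frac{i\pi}{4}\operatorname{sgn}H}=i^{n/2}\,i^{-\operatorname{Inert}W_{xx}}$), with $|\det H|=|\det L|\,|\det(S_{W}-I)|$, and with $\Delta(W)=i^{m}\sqrt{|\det L|}$: the powers $i^{\pm n/2}$ and the $|\det L|^{\pm 1/2}$ cancel, leaving $i^{\,m-\operatorname{Inert}W_{xx}}\,|\det(S_{W}-I)|^{-1/2}=i^{\nu}\,|\det(S_{W}-I)|^{-1/2}$, which is \eqref{cz1}. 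The same sign identities yield \eqref{argdet}: from $\det(S_{W}-I)=(-1)^{n}\det H/\det L$, together with $\operatorname{sgn}(\det H)=(-1)^{\operatorname{Inert}W_{xx}}$ and $\operatorname{sgn}(\det L)=(-1)^{m}$ (the latter being the convention \eqref{swm3}, since $\det L$ is real), one gets $\operatorname{sgn}\det(S_{W}-I)=(-1)^{\,n+\operatorname{Inert}W_{xx}-m}=(-1)^{\,n-\nu}$, i.e. $\arg\det(S_{W}-I)\equiv(n-\nu)\pi\ \operatorname{mod}2\pi$.

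The remaining, and clearly the hardest, point is the quadratic-form identity $c_{0}-\tfrac12 H^{-1}b\cdot b=\tfrac12\,M_{W}z_{0}\cdot z_{0}$ for the phase, where $M_{W}=M_{S_{W}}$ is the symplectic Cayley transform \eqref{cayley}. To prove it I would invert $S_{W}-I$ by block elimination (every Schur complement occurring is again $H$), obtaining
\[
(S_{W}-I)^{-1}=\begin{pmatrix} H^{-1}(L-P) & H^{-1}\\ L-(Q-L)H^{-1}(L-P) & (L-Q)H^{-1}\end{pmatrix},
\]
form $M_{W}=\tfrac12 J+J(S_{W}-I)^{-1}$, and compare it entry by entry with the symmetric matrix
\[
\mathcal{N}=\begin{pmatrix}\tfrac14(P+Q+L+L^{T})-R^{T}H^{-1}R & R^{T}H^{-1}\\ H^{-1}R & -H^{-1}\end{pmatrix}
\]
read off from $c_{0}-\tfrac12 H^{-1}b\cdot b=\tfrac12\,\mathcal{N}z_{0}\cdot z_{0}$. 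The $(p,p)$ blocks coincide trivially, the $(x,p)$ blocks after a single use of $H=P+Q-L-L^{T}$, and the $(x,x)$ blocks after substituting $L+L^{T}=P+Q-H$ and rewriting $R^{T}H^{-1}R$ in terms of $U=P-Q$ and the antisymmetric $V=L-L^{T}$; the $(x,x)$–block verification is where essentially all of the algebra sits, the rest being one-line Gaussian-integral bookkeeping. Once $\mathcal{N}=M_{W}$ is established, the two displayed formulas combine to give $a_{\sigma}^{W,m}(z_{0})=i^{\nu}|\det(S_{W}-I)|^{-1/2}e^{\frac{i}{2\hbar}M_{W}z_{0}\cdot z_{0}}$, which is \eqref{asigmaw1}, and the proof is complete.
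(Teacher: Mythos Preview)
Your argument is correct: the identity for $a_{\sigma}$ in terms of the kernel, the expansion of $\Phi(u)$ with $H=W_{xx}$, the Fresnel evaluation, the determinant/signature bookkeeping giving $i^{\nu}|\det(S_{W}-I)|^{-1/2}$, and the block-by-block identification of the phase matrix $\mathcal{N}$ with the symplectic Cayley transform $M_{W}$ all check out; the $(x,x)$--block verification is indeed straightforward once one writes $R=\tfrac12(U+V)$ with $U=P-Q$ symmetric and $V=L^{T}-L$ antisymmetric and uses $L+L^{T}=P+Q-H$.

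As for comparison with the paper: there is essentially nothing to compare, since the paper does not prove Theorem~\ref{Thm2} but simply refers to \cite{LMP,Birk} (and to \cite{JMPA} for the index). Your direct Fresnel-integral computation on the kernel is precisely the route taken in those references, so your proposal supplies in full the argument the paper only cites. One minor remark: you might streamline the $(x,x)$--block check by noting that, since both $M_{W}$ and $\mathcal{N}$ are manifestly symmetric and agree on the $(p,p)$ and $(x,p)$ blocks, it suffices to verify that $\det\mathcal{N}=\det M_{W}$ or, equivalently, to use the known symmetry $M_{S}^{T}=M_{S}$ together with the inversion formula \eqref{minv} rather than expanding $R^{T}H^{-1}R$ by brute force.
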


\begin{proof}
See \cite{LMP,Birk}.. For a review of the Conley--Zehnder index and its
relation of the Maslov index see \cite{JMPA}.
\end{proof}

Note that if $W=(PL,Q)$ then%
\begin{equation}
W_{xx}=P+Q-L-L^{T} \label{wplq}%
\end{equation}
and
\begin{equation}
\det(S_{W}-I)=(-1)^{n}(\det L^{-1})\det P+Q-L-L^{T}). \label{detSW}%
\end{equation}

The operator $\widehat{S}_{W,m}$ is thus given by
\begin{equation}
\widehat{S}_{W,m}=\left(  \frac{1}{2\pi\hbar}\right)  ^{n}\frac{i^{\nu}}%
{\sqrt{|\det(S_{W}-I)|}}\int_{\mathbb{R}^{2n}}e^{\frac{i}{2\hbar}M_{W}z\cdot
z}\widehat{T}(z)dz; \label{s1}%
\end{equation}
this can be written in several different ways; for instance%
\begin{equation}
\widehat{S}_{W,m}=\left(  \frac{1}{2\pi\hbar}\right)  i^{\nu}\sqrt{|\det
(S_{W}-I)|}\int_{\mathbb{R}^{2n}}e^{\frac{i}{2\hbar}\sigma(Sz,z)}%
\widehat{T}((S-I)z)dz \label{s2}%
\end{equation}
that is, equivalently,
\begin{equation}
\widehat{S}_{W,m}=\left(  \frac{1}{2\pi\hbar}\right)  i^{\nu}\sqrt{|\det
(S_{W}-I)|}\int_{\mathbb{R}^{2n}}\widehat{T}(Sz)\widehat{T}(-z)dz. \label{s3}%
\end{equation}

\begin{example}
\label{ex2}As in Example \ref{ex1} let $S_{\alpha}=%
\begin{pmatrix}
\cos\alpha\smallskip & \sin\alpha\\
-\cos\alpha & \sin\alpha
\end{pmatrix}
$, $\sin\alpha\notin\pi\mathbb{Z}$. We have, for $\alpha\notin2\pi\mathbb{Z}%
$,
\begin{gather*}
M_{\alpha}=\frac{1}{2}J%
\acute{}%
+J(S_{\alpha}-I)^{-1}=^{-}=\frac{1}{2}%
\begin{pmatrix}
\cot\frac{\alpha}{2}\smallskip & 0\\
0 & \cot\frac{\alpha}{2}%
\end{pmatrix}
\\
\det(S_{\alpha}-I)=%
\begin{vmatrix}
\cos\alpha\smallskip-1 & \sin\alpha\\
-\cos\alpha & \sin\alpha-1
\end{vmatrix}
=4\sin^{2}\frac{\alpha}{2}%
\end{gather*}
hence the twisted symbol$a_{\sigma}^{\alpha}$ of $\widehat{S}_{\alpha}is$
given by
\[
a_{\sigma}^{\alpha}(z)=\frac{i^{^{[\alpha/\pi]-n}}}{\sqrt{|\det(S_{W}-I)|}%
}\exp\left(  \frac{i}{4\hbar}(x^{2}+p^{2})\cot\frac{\alpha}{2}\right)
\]

\end{example}

\subsection{ The general case}

The following Lemma complements part (i) of Theorem \ref{Thm1}.

\begin{lemma}
Every $\widehat{S}\in\operatorname{Mp}(n)$ can be written as a a product
$\widehat{S}=\widehat{S}_{W,m}\widehat{S}_{W,^{\prime}m^{\prime}}$ with
$\det(S_{W}-I)\neq0$ and and $\det(S_{W^{\prime}}-I)\neq0$.
\end{lemma}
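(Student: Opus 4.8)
The plan is to start from the factorization $\widehat{S}=\widehat{S}_{W_0,m_0}\widehat{S}_{W_0',m_0'}$ guaranteed by Theorem \ref{Thm1}(i) and then perturb the two factors by a suitable one‑parameter family of metaplectic operators so as to make both underlying symplectic matrices satisfy the nondegeneracy condition $\det(S-I)\neq0$, while keeping the product equal to $\widehat{S}$. Concretely, I would insert the identity $\widehat{S}_{W_0,m_0}\widehat{S}_{W_0',m_0'}=\bigl(\widehat{S}_{W_0,m_0}\widehat{U}_t^{-1}\bigr)\bigl(\widehat{U}_t\widehat{S}_{W_0',m_0'}\bigr)$ where $\widehat{U}_t\in\operatorname{Mp}(n)$ projects to some $U_t\in\operatorname{Sp}(n)$ with $U_0=I$, and then argue that for all but finitely many (or at least for some) small $t$ the matrices $S_{W_0}U_t^{-1}$ and $U_tS_{W_0'}$ are both free (so that the corresponding factors are genuine quadratic Fourier integral operators of the form $\widehat{S}_{W,m}$) and additionally satisfy $\det(S_{W_0}U_t^{-1}-I)\neq0$ and $\det(U_tS_{W_0'}-I)\neq0$.

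The key steps, in order, are: (1) recall that the set of free symplectic matrices is open and dense, and likewise the set $\{S:\det(S-I)\neq0\}$ is the complement of a proper algebraic subvariety of $\operatorname{Sp}(n)$, hence open and dense; (2) observe that $t\mapsto S_{W_0}U_t^{-1}$ and $t\mapsto U_tS_{W_0'}$ are analytic curves in $\operatorname{Sp}(n)$, so the functions $t\mapsto\det(S_{W_0}U_t^{-1}-I)$, $t\mapsto\det(U_tS_{W_0'}-I)$ and the two "freeness" minors are real‑analytic in $t$; (3) choose the family $U_t$ — e.g. $U_t=\exp(tX)$ for a generic $X$ in the symplectic Lie algebra, or more concretely a family built from the elementary generators $V_{-P}$, $M_L$, $J$ of equation (\ref{swplq}) — so that none of these four analytic functions vanishes identically in $t$; (4) conclude that their common zero set is finite, pick any $t$ outside it, and set $\widehat{S}_{W,m}:=\widehat{S}_{W_0,m_0}\widehat{U}_t^{-1}$, $\widehat{S}_{W',m'}:=\widehat{U}_t\widehat{S}_{W_0',m_0'}$, where the Maslov‑type indices $m,m'$ are fixed by the composition rule in Theorem \ref{Thm1}(iv) and by (\ref{swinv}).

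The main obstacle I anticipate is step (3): one must exhibit a single family $U_t$ for which all four analytic functions are simultaneously not identically zero. It is conceivable a priori that, for the particular $S_{W_0}$ and $S_{W_0'}$ at hand, every naive choice of $U_t$ forces one of these determinants to vanish for all $t$ — so the argument needs either a genericity/dimension count showing the "bad" directions $X$ form a lower‑dimensional set, or an explicit verification using the generators: e.g. composing with $\widehat{V}_{-P}$ changes $S_{W_0'}$ to $V_{-P}S_{W_0'}$, and one checks that $\det(V_{-P}S_{W_0'}-I)$, viewed as a polynomial in the entries of $P$, is not identically zero (its top‑degree part can be read off from (\ref{detSW})), and similarly on the other side. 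A clean way to finish is to reduce everything to the case $n=1$ coordinate‑block by coordinate‑block, or simply to invoke that a nonempty Zariski‑open condition on a connected algebraic group is automatically dense and intersects any nonconstant analytic curve in a cofinite set; the remaining bookkeeping — verifying that the two perturbed factors are still of quadratic‑Fourier type and assembling the correct Maslov indices — is routine given Theorem \ref{Thm1}.
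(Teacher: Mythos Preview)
Your approach is correct and is in fact the same strategy the paper uses: insert $\widehat{U}^{-1}\widehat{U}$ between the two factors from Theorem~\ref{Thm1}(i) and choose the perturbation so that both determinant conditions hold. The difference is entirely in the execution of your step~(3), which the paper dispatches by an explicit elementary choice rather than a genericity argument.

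Concretely, the paper takes $\widehat{U}_\lambda=\widehat{V}_{\lambda I}$ (a scalar chirp). Using the factorization (\ref{swplq}) one has
\[
\widehat{S}_{W_0,m_0}\widehat{V}_{-\lambda I}=\widehat{V}_{-P}\widehat{M}_{L,m_0}\widehat{J}\widehat{V}_{-(Q+\lambda I)},\qquad
\widehat{V}_{\lambda I}\widehat{S}_{W_0',m_0'}=\widehat{V}_{-(P'-\lambda I)}\widehat{M}_{L',m_0'}\widehat{J}\widehat{V}_{-Q'},
\]
so the perturbed factors have generating data $(P,L,Q+\lambda I)$ and $(P'-\lambda I,L',Q')$. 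Two things follow immediately: (a) the matrices $L,L'$ are untouched, so both factors remain \emph{free} for every $\lambda$ and your freeness constraint drops out entirely; (b) by formula~(\ref{detSW}) the two conditions $\det(S-I)\neq0$ become
\[
\det\bigl(P+Q+\lambda I-L-L^{T}\bigr)\neq0,\qquad \det\bigl(P'+Q'-\lambda I-L'-L'^{T}\bigr)\neq0,
\]
which are (up to sign) characteristic polynomials in $\lambda$, hence monic of degree $n$ and certainly not identically zero. Avoid the at most $2n$ bad values of $\lambda$ and you are done; no Zariski density, analyticity, or Lie-algebra genericity is needed, and the Maslov indices $m_0,m_0'$ do not move.

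Your more abstract route via $\exp(tX)$ and real-analytic dependence also works, but buys generality you don't need at the cost of having to verify non-identical vanishing of four functions instead of two. The paper's choice of $\widehat{V}_{\lambda I}$ is exactly the ``explicit verification using the generators'' you allude to at the end of your step~(3); pushing that hint through as above is both shorter and avoids the obstacle you flagged.
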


\begin{proof}
In view of the factorization result (\ref{swplq}) we have
\begin{equation}
\widehat{S}=\widehat{V}_{-P}\widehat{M}_{L,m}\widehat{J}\widehat{V}%
_{-(Q+P^{\prime}=}\widehat{M}_{L^{\prime},m^{\prime}}\widehat{J}%
\widehat{V}_{-Q^{\prime}} \label{fact1}%
\end{equation}
and the conditions $\det(S_{W}-I)\neq0\det(S_{W^{\prime}}-I)\neq0$ are
equivalent formula (\ref{detSW}))%
\begin{equation}
\det P+Q-L-L^{T}).\det P^{\prime}+Q^{\prime}-L^{\prime}-L^{\prime T})\neq0.
\label{fact2}%
\end{equation}
Factorization (\ref{fact1}) of $\widehat{S}$ does not change if we replace
simultaneously $Q$ with $Q+\lambda I$ and $P^{\prime}$ with $Q$ with
$P^{\prime}-\lambda I$ for some $\lambda\in\mathbb{R}$. Then choose $\lambda$
such that (\ref{fact2}) holds.
\end{proof}

Theorem \ref{Thm2} implies that:

\begin{corollary}
\label{cor1}Let $\widehat{S}\in\operatorname{Mp}(n)$ be such that $\det
(S_{W}-I)\neq0$. \ If $\widehat{S}=\widehat{S}_{W,m}\widehat{S}_{W,^{\prime
}m^{\prime}}$ with $\det(S_{W}-I)\neq0$ and and $\det(S_{W^{\prime}}-I)\neq0$.
Then the twisted Weyl symbol $a^{\widehat{S}}$ of $\widehat{S}$ is given by%
\[
a^{\widehat{S}}(z)=\frac{i^{\nu+\nu^{\prime}+\operatorname*{sign}(M))}}%
{\sqrt{|\det(S-I)}}e^{\frac{i}{2\hbar}M_{S}z\cdot z}%
\]
where $\operatorname*{sign}(M)$ is the signature of $M=M_{W}+M_{W^{\prime}}$
and $\nu$, $\nu^{\prime}$ the Conley--Zehnder indices of $\widehat{S}_{W,m}$
and $\widehat{S}_{W,^{\prime}m^{\prime}}$, respectively.
\end{corollary}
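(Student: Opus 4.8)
The plan is to obtain the twisted Weyl symbol of the product $\widehat{S}=\widehat{S}_{W,m}\widehat{S}_{W',m'}$ directly from the harmonic decomposition (\ref{bochner}), using Theorem \ref{Thm2} for the twisted symbols of the two factors together with the multiplication law of the Heisenberg--Weyl operators. By the Lemma a factorization with $\det(S_{W}-I)\neq 0$ and $\det(S_{W'}-I)\neq 0$ exists, and by hypothesis we work with such a one; Theorem \ref{Thm2} then gives
\[
a_{\sigma}^{W,m}(z)=\frac{i^{\nu}}{\sqrt{|\det(S_{W}-I)|}}\,e^{\frac{i}{2\hbar}M_{W}z\cdot z},\qquad a_{\sigma}^{W',m'}(z)=\frac{i^{\nu'}}{\sqrt{|\det(S_{W'}-I)|}}\,e^{\frac{i}{2\hbar}M_{W'}z\cdot z},
\]
with $M_{W}=M_{S_{W}}$ and $M_{W'}=M_{S_{W'}}$. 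Inserting these into (\ref{bochner}) for each factor and using $\widehat{T}(z_{1})\widehat{T}(z_{2})=e^{\frac{i}{2\hbar}\sigma(z_{1},z_{2})}\widehat{T}(z_{1}+z_{2})$, then substituting $z=z_{1}+z_{2}$ and noting $\sigma(z-z',z')=\sigma(z,z')$, the twisted symbol of the product appears as the twisted convolution
\[
a^{\widehat{S}}(z)=\left(\tfrac{1}{2\pi\hbar}\right)^{n}\int_{\mathbb{R}^{2n}}e^{\frac{i}{2\hbar}\sigma(z,z')}\,a_{\sigma}^{W,m}(z-z')\,a_{\sigma}^{W',m'}(z')\,dz'.
\]

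Substituting the two Gaussians, the $z'$-integrand is $e^{\frac{i}{2\hbar}Q(z,z')}$; since $M_{W}$ is symmetric, the part of $Q$ quadratic in $z'$ is $(M_{W}+M_{W'})z'\cdot z'=:Mz'\cdot z'$, the part linear in $z'$ is $((J-2M_{W})z)\cdot z'$, and the remaining term is $M_{W}z\cdot z$. Completing the square in $z'$ and evaluating the resulting oscillatory Gaussian by $\int_{\mathbb{R}^{2n}}e^{\frac{i}{2\hbar}Aw\cdot w}\,dw=(2\pi\hbar)^{n}|\det A|^{-1/2}e^{\frac{i\pi}{4}\operatorname*{sign}A}$, I would obtain
\[
a^{\widehat{S}}(z)=\frac{i^{\nu}i^{\nu'}e^{\frac{i\pi}{4}\operatorname*{sign}M}}{\sqrt{|\det(S_{W}-I)\,\det(S_{W'}-I)\,\det M|}}\;e^{\frac{i}{2\hbar}\widetilde{M}z\cdot z},\qquad \widetilde{M}=M_{W}-(M_{W}+\tfrac{1}{2}J)M^{-1}(M_{W}-\tfrac{1}{2}J),
\]
with the understanding that $\det M\neq 0$, which I will justify below.

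The hard part is then the purely algebraic identification: (a) $\widetilde{M}=M_{S}$ for $S=S_{W}S_{W'}$, (b) $|\det(S_{W}-I)\,\det(S_{W'}-I)\,\det M|=|\det(S-I)|$, and (c) the matching of phases to yield $i^{\nu+\nu'+\operatorname*{sign}M}$. From the inversion formula (\ref{minv}) one has $S_{W}-I=(M_{W}-\tfrac{1}{2}J)^{-1}J$, and likewise for $S_{W'}$ and $S$; these yield the key factorization
\[
I+(S_{W}-I)^{-1}+(S_{W'}-I)^{-1}=(S_{W}-I)^{-1}(S-I)(S_{W'}-I)^{-1}=J^{-1}M,
\]
from which (b) follows on taking determinants (using $\det J=1$), and which also shows $\det M\neq 0$ under the standing hypotheses. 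For (a) I would substitute $M_{W}=\tfrac{1}{2}J+J(S_{W}-I)^{-1}$ and $M_{W'}=\tfrac{1}{2}J+J(S_{W'}-I)^{-1}$ into $\widetilde{M}$ and simplify using the displayed factorization and $S_{W}JS_{W}^{T}=J$, arriving at $\widetilde{M}=\tfrac{1}{2}J+J(S-I)^{-1}=M_{S}$; that $\widetilde{M}=\widetilde{M}^{T}$ is visibly built in is a good consistency check, and the whole identity can be tested on the one-parameter family of Example \ref{ex2}, where it collapses to the addition formula for $\cot$. Finally, the metaplectic phase $e^{\frac{i\pi}{4}\operatorname*{sign}M}$ must be reconciled with $i^{\nu}i^{\nu'}$ through the convention for $\arg\det(S-I)$ fixed in (\ref{argdet}); the careful accounting of these integers modulo $4$ is the delicate point that closes the argument.
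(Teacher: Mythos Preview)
The paper does not actually prove this corollary: its proof reads in full ``See \cite{LMP} and \cite{Birk}, Ch.~7.'' Your proposal is precisely the computation carried out in those references: write each factor via the harmonic decomposition (\ref{bochner}) with the Gaussian twisted symbols of Theorem~\ref{Thm2}, use (\ref{noco2}) to reduce the product to a twisted convolution of Gaussians, evaluate by completing the square, and then identify the resulting quadratic form and amplitude with $M_{S}$ and $|\det(S-I)|^{-1/2}$ via the symplectic Cayley transform. Your key identity
\[
I+(S_{W}-I)^{-1}+(S_{W'}-I)^{-1}=(S_{W}-I)^{-1}(S-I)(S_{W'}-I)^{-1}=J^{-1}M
\]
is exactly the device used in \cite{Birk} to obtain both $\det M\neq 0$ and the determinant relation, and the reduction $\widetilde{M}=M_{S}$ is likewise the standard Cayley-transform manipulation there. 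So your approach is correct and coincides with the one the paper is citing.

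One caution on point~(c): your Fresnel integral produces the phase $e^{\frac{i\pi}{4}\operatorname{sign}M}=i^{\frac{1}{2}\operatorname{sign}M}$ (note that $\operatorname{sign}M$ is even since $M$ is nondegenerate of size $2n$), whereas the corollary as stated writes $i^{\operatorname{sign}M}$. Given the density of typographical slips in the surrounding text, the discrepancy is almost certainly in the stated formula rather than in your computation; in \cite{LMP,Birk} the exponent is indeed $\tfrac{1}{2}\operatorname{sign}M$, and this is what matches the Conley--Zehnder index bookkeeping via (\ref{argdet}). You should flag this when writing up.
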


\begin{proof}
See \cite{LMP} and \cite{Birk}, Ch.7.
\end{proof}

\section{Extension of $\operatorname{Mp}(n)$ to Phase Space}

\subsection{Bopp pseudodifferential operators}

Bopp operators are extensions of Weyl operators to phase space; We will be
following the approach given in \cite{Birkbis}, Ch.19; also see \cite{CPDE}
for the study of spectral properties of these operators.

Bopp \cite{Bopp} essentially consists in replacing the usual Schr\"{o}dinger
"quantization rules" $x\longrightarrow x,$, $p\longrightarrow-i\hbar
\partial_{x}$\ with the more symmetric operators\
\begin{equation}
x\longrightarrow\widetilde{x}=x+\tfrac{1}{2}i\hbar\partial_{p}\text{ \ ,
\ }p\longrightarrow\ \widetilde{p}=p-\tfrac{1}{2}i\hbar\partial_{x}
\label{bopp}%
\end{equation}
which act, not on functions defined on $\mathbb{R}^{n}$, but on functions
defined on the phase space $\mathbb{R}^{2n}$. To make this rigorous, we begin
by defining a phase-space version of the Heisenberg--Weyl displacement
operators (\ref{HW}) by setting, for $F\in\mathcal{S}^{\prime}(\mathbb{R}%
^{2n})$,%
\[
\widetilde{T}(z_{0})F(z)=e^{-\frac{i}{\hbar}\sigma(z,z_{0})}F(z-\tfrac{1}%
{2}z_{0}).
\]
These phase space operators obey relations similar to those%
\begin{align}
\widehat{T}(z_{0})\widehat{T}(z_{1})  &  =e^{-\frac{i}{\hbar}\sigma
(z_{0},z_{1})}\widehat{T}(z_{1})\widehat{T}(z_{0})\label{noco1}\\
\widehat{T}(z_{0}+z_{1})  &  =e^{-\tfrac{i}{2\hbar}\sigma(z_{0},z_{1}%
)}\widehat{T}(z_{0})\widehat{T}(z_{1}) \label{noco2}%
\end{align}
satisfied by the operators $\widehat{T}(z_{0})$, namely:%
\begin{align}
\widetilde{T}(z_{0}+z_{1})  &  =e^{-\frac{i}{2\hbar}\sigma(z_{0},z_{1}%
)}\widetilde{T}(z_{0})\widetilde{T}(z_{1})\label{a15}\\
\widetilde{T}(z_{1})\widetilde{T}(z_{0})  &  =e^{-\frac{i}{\hbar}\sigma
(z_{0},z_{1})}\widetilde{T}(z_{0})\widetilde{T}(z_{1}). \label{b15}%
\end{align}
An essential observation is that $\widehat{T}(z_{0})\psi($ and $\widetilde{T}%
(z_{0})$ are intertwined by the cross-Wigner transform%
\[
W(f,g)=\left(  \tfrac{1}{2\pi\hbar}\right)  ^{n}\int_{\mathbb{R}^{n}}%
e^{-\frac{i}{\hbar}p\cdot y}f(x+\tfrac{1}{2}y)\overline{g(x-\tfrac{1}{2}%
y)}dy.
\]
For $f,g\in L^{2}(\mathbb{R}^{n})$ we have
\begin{equation}
W(\widehat{T}(z_{0})f,g)=\widetilde{T}(z_{0})W(f,g). \label{intertikde}%
\end{equation}

Recalling formula (\ref{bochner}) which gives the Weyl operator in terms of
the symplectic Fourier transform (\ref{SFT}=of its symbol, we define:

\begin{definition}
Let $a\in\mathcal{S}^{\prime}(\mathbb{R}^{2n})$; the Bopp operator
$\widetilde{A}=\operatorname*{Op}\nolimits_{\mathrm{Bopp}}(a)$ with symbol $a$
is defined by%
\begin{equation}
\widetilde{A}F(z)=\left(  \tfrac{1}{2\pi\hbar}\right)  ^{n}\int_{\mathbb{R}%
^{2n}}a_{\sigma}(z_{0})\widetilde{T}(z_{0})F(z)dz_{0} \label{atilde}%
\end{equation}
for $F\in\mathcal{S}(\mathbb{R}^{2n})$.
\end{definition}

Bopp operators have two following crucial properties, which we summarize in
the theorem below:

\begin{theorem}
(i) The Bopp operator $\widetilde{A}=\operatorname*{Op}%
\nolimits_{\mathrm{Bopp}}(a)$ is a continuous operator $\mathcal{S}%
(\mathbb{R}^{2n})\longrightarrow\mathcal{S}^{\prime}(\mathbb{R}^{2n})$ hence a
Weyl operator; its symbol is given by%
\begin{equation}
\widetilde{a}(z,\zeta)=a(x-\tfrac{1}{2}\zeta_{p},p+\tfrac{1}{2}\zeta_{x})
\label{atilde15}%
\end{equation}
where $\zeta=(\zeta_{x},\zeta_{p})$; (ii) Let $(f,g\in L^{2}(\mathbb{R}^{n})$;
we have he transform
\begin{equation}
\widetilde{A}W(f,g)=W\widehat{A}f,g). \label{uffi}%
\end{equation}
(iii) The mapping $U_{g}f=(2\pi\hbar)^{n/2}W(f,g)$is a partial isometry
$L^{2}(\mathbb{R}^{n})\longrightarrow L^{2}(\mathbb{R}^{2n}).$intertwining the
Weyl operator $\widehat{A}$ and the corresponding Bopp operator $\widetilde{A}%
$:%
\begin{equation}
\widetilde{A}U_{g}=U_{g}\widehat{A}. \label{afu}%
\end{equation}

\end{theorem}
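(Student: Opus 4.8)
The plan is to prove the three assertions essentially in the order they are stated, exploiting the intertwining relation (\ref{intertikde}) as the central tool and pulling the structural facts about Weyl operators back through the symplectic Fourier transform.

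\textbf{Part (i).} First I would take the defining formula (\ref{atilde}) and substitute the explicit action of $\widetilde{T}(z_0)$, namely $\widetilde{T}(z_0)F(z)=e^{-\frac{i}{\hbar}\sigma(z,z_0)}F(z-\tfrac12 z_0)$, to get
\begin{equation}
\widetilde{A}F(z)=\left(\tfrac{1}{2\pi\hbar}\right)^{n}\int_{\mathbb{R}^{2n}}a_{\sigma}(z_0)e^{-\frac{i}{\hbar}\sigma(z,z_0)}F(z-\tfrac12 z_0)\,dz_0.\nonumber
\end{equation}
The next step is to recognize this as a $2n$-dimensional Weyl operator acting on $\mathbb{R}^{2n}$: I would undo the symplectic Fourier transform (\ref{SFT}) to re-express $a_\sigma$ in terms of $a$, change variables so that the exponential takes the standard Weyl form $e^{\frac{i}{\hbar}\zeta\cdot(z-w)}$ with $\zeta$ the dual variable on $\mathbb{R}^{2n}$, and read off the symbol. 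The arithmetic of this change of variables is routine; what it produces is precisely $\widetilde{a}(z,\zeta)=a(x-\tfrac12\zeta_p,\,p+\tfrac12\zeta_x)$, i.e. the Weyl symbol on phase space is obtained from $a$ by the Bopp shift $z\mapsto z+\tfrac12 J\zeta$ written out componentwise. Continuity $\mathcal{S}(\mathbb{R}^{2n})\to\mathcal{S}'(\mathbb{R}^{2n})$ then follows because $\widetilde a$ is a tempered distribution in $(z,\zeta)$ (it is a pullback of the tempered distribution $a$ under a linear map), so the general Schwartz-kernel/Weyl-calculus continuity theorem applies.

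\textbf{Part (ii).} Here I would argue directly from (\ref{intertikde}) combined with the Bochner/harmonic decomposition (\ref{bochner}) of $\widehat A$. Applying $W(\cdot,g)$ to $\widehat A f=\left(\tfrac{1}{2\pi\hbar}\right)^{n}\int a_\sigma(z_0)\widehat T(z_0)f\,dz_0$ and using linearity of $f\mapsto W(f,g)$ together with $W(\widehat T(z_0)f,g)=\widetilde T(z_0)W(f,g)$ gives
\begin{equation}
W(\widehat A f,g)=\left(\tfrac{1}{2\pi\hbar}\right)^{n}\int_{\mathbb{R}^{2n}}a_\sigma(z_0)\widetilde T(z_0)W(f,g)\,dz_0=\widetilde A\,W(f,g),\nonumber
\end{equation}
which is exactly (\ref{uffi}). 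The only care needed is justifying the interchange of the integral with the (continuous, sesquilinear) map $W(\cdot,g)$; for $f,g\in L^2(\mathbb{R}^n)$ and $a$ in the relevant symbol class this is a standard density/continuity argument, done first for $a_\sigma\in\mathcal{S}(\mathbb{R}^{2n})$ and then extended by duality.

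\textbf{Part (iii).} Finally, normalizing, set $U_g f=(2\pi\hbar)^{n/2}W(f,g)$. The partial-isometry claim is the Moyal identity: $\langle W(f_1,g),W(f_2,g)\rangle_{L^2(\mathbb{R}^{2n})}=(2\pi\hbar)^{-n}\langle f_1,f_2\rangle\,\|g\|^2$, so when $\|g\|=1$, $U_g$ is an isometry onto its range, hence a partial isometry in general. The intertwining relation $\widetilde A U_g=U_g\widehat A$ is then just (\ref{uffi}) multiplied by the constant $(2\pi\hbar)^{n/2}$, since both $\widetilde A$ in the form (\ref{atilde}) and $U_g$ are linear. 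I expect the main obstacle to be purely technical rather than conceptual: making the manipulations in Parts (i) and (ii) rigorous when $a$ is merely a tempered distribution — in particular justifying the oscillatory-integral change of variables in Part (i) and the integral/continuity interchange in Part (ii). Both are handled in the standard way, proving the identities first for Schwartz symbols where all integrals converge absolutely and then passing to the general case by the continuity of the Weyl and Bopp quantizations and the density of $\mathcal S$ in $\mathcal S'$.
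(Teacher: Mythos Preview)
Your proposal is correct and follows essentially the same approach as the paper: the paper's own proof is little more than a citation to \cite{Birkbis}, Chapters 18--19, together with the two remarks that (\ref{afu}) follows from the intertwining relation (\ref{intertikde}) and that the partial-isometry statement in (iii) follows from Moyal's identity --- exactly the ingredients you use. Your write-up simply supplies the details (the explicit substitution and change of variables for (i), the harmonic decomposition argument for (ii), and the Moyal computation for (iii)) that the paper leaves to the reference; one tiny slip is that the Bopp shift should be written $z\mapsto z-\tfrac{1}{2}J\zeta$ rather than $z\mapsto z+\tfrac{1}{2}J\zeta$ to match the componentwise formula you (correctly) state.
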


\begin{proof}
See \cite{Birkbis}, Chapters 18--19 for detailed proofs. Notice that
(\ref{afu}) follows from the relation (\ref{intertikde}). Property (iii)
follows using Moyal's identity.,.
\end{proof}

\subsection{Application to metaplectic operators}

Let $\widehat{S}\in\in\operatorname{Mp}(n)$ be a$\det(S-I)\neq0$Viewed as a
Weyl operator it is explicitly given by the Bochner integral
\begin{equation}
\widehat{S}f(x)=\left(  \frac{1}{2\pi\hbar}\right)  ^{n}\frac{i^{\nu}}%
{\sqrt{|\det(S-I)|}}\int_{\mathbb{R}^{2n}}e^{\frac{i}{2\hbar}M_{SW}z_{0}\cdot
z_{0}}\widehat{T}(z_{0})dz_{0}%
\end{equation}
and its Bopp representation is thus obtained by replacing $\widehat{T}(z)$
with $\widetilde{T}(z)$:
\begin{equation}
\widetilde{S}_{W,m}F(z)=\left(  \frac{1}{2\pi\hbar}\right)  ^{n}\frac{i^{\nu}%
}{\sqrt{|\det(S_{W}-I)|}}\int_{\mathbb{R}^{2n}}e^{\frac{i}{2\hbar}M_{_{W}%
S}z_{0}\cdot z_{0}}\widetilde{T}(z_{0})F(z)dz_{0} \label{sf}%
\end{equation}
more generally (Corollary \ref{cor1}),%
\begin{equation}
\widetilde{S}F(z)=\frac{i^{\nu+\nu^{\prime}+\operatorname*{sign}(M_{S}))}%
}{\sqrt{|\det(S-I)}}\int_{\mathbb{R}^{2n}}e^{\frac{i}{2\hbar}M_{S}z_{0}\cdot
z_{0}}\widetilde{T}(z_{0})F(z)dz_{0}. \label{sfbis}%
\end{equation}
In view of the inversion formula (\ref{swinv}=) the operators $\widetilde{S}%
_{W,m}$ are invertible, and
\begin{equation}
\widetilde{S}_{W,m}^{-1}=\widetilde{S}_{W^{\prime},m^{\prime}}\text{ \ ,
\ }W^{\prime}(x,x^{\prime})=-W(x^{\prime},x)\text{ , }m^{\prime}=n-m.
\label{swinvbis}%
\end{equation}

\begin{definition}
The group of operators generated by the $\widetilde{S}_{W,m}$, $\det
(S_{W}-I)\neq0$, is denoted by $\widetilde{\operatorname{Mp}}(n)$ and called
the extended metaplectic group-.
\end{definition}

We are going to prove that $\widetilde{\operatorname{Mp}}(n)$ is a group of
unitary operator on $L^{2}(\mathbb{R}^{2n})$. To prove this we need the
following lemma:

\begin{lemma}
\label{lem}Let $(f_{j})_{j}$ and $(g_{j})_{j}$ be orthonormal bases of
$L^{2}(\mathbb{R}^{n})$. Then system of vectors $((2\pi\hbar)^{n}W(f_{j}%
,g_{k}))_{jk}$ is an orthonormal basis of $L^{2}(\mathbb{R}^{2n})$.
\end{lemma}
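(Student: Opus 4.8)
The plan is to prove the statement in two stages: first establish orthonormality of the system $\{(2\pi\hbar)^{n}W(f_{j},g_{k})\}_{jk}$ in $L^{2}(\mathbb{R}^{2n})$, and then establish completeness. Both stages rest on \emph{Moyal's identity}
\[
\langle W(f,g),W(f',g')\rangle_{L^{2}(\mathbb{R}^{2n})}=\left(\tfrac{1}{2\pi\hbar}\right)^{n}\langle f,f'\rangle_{L^{2}(\mathbb{R}^{n})}\,\overline{\langle g,g'\rangle_{L^{2}(\mathbb{R}^{n})}},
\]
which is the fundamental tool and which I would invoke as a known property of the cross-Wigner transform (it is already alluded to in the proof of the previous theorem, ``Property (iii) follows using Moyal's identity'').

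\textbf{Orthonormality.} Applying Moyal's identity to $f=f_{j}$, $g=g_{k}$, $f'=f_{j'}$, $g'=g_{k'}$ gives
\[
\langle (2\pi\hbar)^{n}W(f_{j},g_{k}),(2\pi\hbar)^{n}W(f_{j'},g_{k'})\rangle=(2\pi\hbar)^{n}\langle f_{j},f_{j'}\rangle\,\overline{\langle g_{k},g_{k'}\rangle}=\delta_{jj'}\delta_{kk'},
\]
using that $(f_{j})$ and $(g_{k})$ are orthonormal. Hence the system is orthonormal; in particular it is a linearly independent family sitting inside $L^{2}(\mathbb{R}^{2n})$.

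\textbf{Completeness.} This is the step I expect to be the main obstacle, since orthonormality alone does not force a basis. The cleanest route is to use the partial isometry / intertwining structure already recorded in the excerpt: the map $U_{g}f=(2\pi\hbar)^{n/2}W(f,g)$ is, for fixed unit vector $g$, an isometry $L^{2}(\mathbb{R}^{n})\to L^{2}(\mathbb{R}^{2n})$ (this is Moyal's identity with $g'=g$), so $(\sqrt{2\pi\hbar}^{\,n}W(f_{j},g_{k}))_{j}$ is an orthonormal \emph{system} in $L^{2}(\mathbb{R}^{2n})$ for each fixed $k$, spanning the closed subspace $\mathcal{H}_{k}:=U_{g_{k}}(L^{2}(\mathbb{R}^{n}))$. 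It therefore suffices to show that the $\mathcal{H}_{k}$ are mutually orthogonal (immediate from the orthonormality computation above, since $\langle U_{g_{k}}f_{j},U_{g_{k'}}f_{j'}\rangle=0$ for $k\neq k'$) and that $\bigoplus_{k}\mathcal{H}_{k}=L^{2}(\mathbb{R}^{2n})$, i.e. that a function $F\in L^{2}(\mathbb{R}^{2n})$ orthogonal to every $W(f_{j},g_{k})$ must vanish. For the latter I would argue: orthogonality to all $W(f_{j},g_{k})$ with $(f_{j}),(g_{k})$ spanning $L^{2}(\mathbb{R}^{n})$ forces orthogonality to $W(f,g)$ for \emph{all} $f,g\in L^{2}(\mathbb{R}^{n})$ (expand $f,g$ in the respective bases and pass to the limit using continuity of $W$ and of the inner product); but the cross-Wigner transforms $W(f,g)$, as $f,g$ range over $L^{2}(\mathbb{R}^{n})$ (equivalently over $\mathcal{S}(\mathbb{R}^{n})$), have dense linear span in $L^{2}(\mathbb{R}^{2n})$ — this is a standard density fact, provable e.g. from the fact that the corresponding integral kernels $f(x+\tfrac12 y)\overline{g(x-\tfrac12 y)}$ span a dense subspace of $L^{2}(\mathbb{R}^{n}_{x}\times\mathbb{R}^{n}_{y})$ together with the unitarity (up to a constant) of the partial Fourier transform in $y$ that produces $W(f,g)$ from the kernel. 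Hence $F=0$, completeness follows, and the orthonormal system is an orthonormal basis. If one prefers to avoid invoking the density statement as a black box, an alternative is to fix one orthonormal basis $(h_{l})$ with all the explicit Moyal relations and note that $L^{2}(\mathbb{R}^{2n})=\bigoplus_{k}U_{g_{k}}L^{2}(\mathbb{R}^{n})$ already because $\sum_{k}U_{g_{k}}U_{g_{k}}^{*}=I$ on $L^{2}(\mathbb{R}^{2n})$, which is exactly the statement that the $g_{k}$ form a (tight, with the chosen normalization) resolution of the identity — again a reformulation of Moyal's identity summed over $k$.

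\textbf{Summary of steps.} (1) State Moyal's identity. (2) Deduce $\delta_{jj'}\delta_{kk'}$ orthonormality by direct substitution. (3) For completeness, reduce to: $F\perp W(f_{j},g_{k})$ for all $j,k$ $\Rightarrow$ $F\perp W(f,g)$ for all $f,g$ $\Rightarrow$ $F=0$, the last implication being the density of $\{W(f,g)\}$ in $L^{2}(\mathbb{R}^{2n})$. The only non-bookkeeping point is that density statement, so that is where I would be careful; everything else is a routine consequence of Moyal's identity already used in the paper.
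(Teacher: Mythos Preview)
Your approach matches the paper's: the paper's proof consists of a citation to \cite{Birk}, Ch.~9, Thm.~4.4.2, and only spells out the orthonormality step, obtaining it (exactly as you do) from Moyal's identity. Your completeness argument via density of the cross-Wigner transforms in $L^{2}(\mathbb{R}^{2n})$ is a correct and standard way to fill in what the paper leaves to the reference; the kernel/partial-Fourier-transform justification you sketch is exactly how one proves that density.

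One cosmetic point: the normalizing factor should be $(2\pi\hbar)^{n/2}$, not $(2\pi\hbar)^{n}$, so that Moyal gives unit norm. This is a typo in the paper's statement of the lemma which the paper itself silently corrects in its proof; your displayed orthonormality computation inherits the error (the middle term $(2\pi\hbar)^{n}\delta_{jj'}\delta_{kk'}$ is not $\delta_{jj'}\delta_{kk'}$), though your later use of $U_{g_{k}}f=(2\pi\hbar)^{n/2}W(f,g_{k})$ shows you have the right constant in mind.
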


\begin{proof}
See \cite{Birk}, Ch.9, Thm. 4.4.2. The orthonormality of the vectors
$(2\pi\hbar)^{n/2}W(f_{j},g_{k})$ follows from Moyal's identity%
\begin{equation}
(W(f,g)|W(f^{\prime},g^{\prime}))_{L^{2}(\mathbb{R}^{2n})}=\left(  \frac
{1}{2\pi\hbar}\right)  ^{n}i(f|f^{\prime})_{L^{2}(\mathbb{R}^{n}).}%
\overline{(g|g^{\prime})_{L^{2}(\mathbb{R}^{n}).}}. \label{Moyal}%
\end{equation}

\end{proof}

\begin{theorem}
Let $\widehat{S}\in\operatorname*{Mp}(n)$ have projection $S\in
\operatorname*{Sp}(n)$such that $\det(S-I)\neq0$. (i) Let $f,g\in
L^{2}(\mathbb{R}^{n})$. We have%
\begin{equation}
\widetilde{S}W(f,g)=W(\widehat{S}f,g)\label{swf}%
\end{equation}
and hence $|\widetilde{S}$
\begin{equation}
||\widetilde{S}W(f,g)||_{L^{2}(\mathbb{R}^{2n})}=||f||_{L^{2}(\mathbb{R}%
^{n}).}||g||_{L^{2}(\mathbb{R}^{n}).}.\label{sw}%
\end{equation}
(ii) Let $F\in L^{2}(\mathbb{R}^{2n})$. We have \
\begin{equation}
||\widetilde{S}F||_{L^{2}(\mathbb{R}^{2n})}=||F||_{L^{2}(\mathbb{R}^{2n})}%
^{2}.\label{SF}%
\end{equation}
hence the correspondence $F\longmapsto\widetilde{S}F\in$ is unitary on
$L^{2}(\mathbb{R}^{2n})$.
\end{theorem}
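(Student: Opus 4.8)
The plan is to derive everything from the intertwining relation $\widetilde{S}W(f,g)=W(\widehat{S}f,g)$ together with the fact (Lemma~\ref{lem}) that the rescaled cross-Wigner transforms of orthonormal bases form an orthonormal basis of $L^{2}(\mathbb{R}^{2n})$. First I would prove (i): starting from the Bopp representation (\ref{sfbis}) of $\widetilde{S}$ and the representation (\ref{s3}) (or (\ref{s1})) of $\widehat{S}$ as a Bochner integral in the $\widehat{T}(z_{0})$, I apply the intertwining identity (\ref{intertikde}), namely $W(\widehat{T}(z_{0})f,g)=\widetilde{T}(z_{0})W(f,g)$, under the integral sign. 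Since both operators are built from the \emph{same} scalar kernel $\left(\tfrac{1}{2\pi\hbar}\right)^{n}\frac{i^{\nu+\nu'+\operatorname{sign}(M_{S})}}{\sqrt{|\det(S-I)|}}e^{\frac{i}{2\hbar}M_{S}z_{0}\cdot z_{0}}$ — one paired with $\widehat{T}(z_{0})$ acting on $f$, the other with $\widetilde{T}(z_{0})$ acting on $W(f,g)$ — linearity of $W(\cdot,g)$ in the first argument gives $\widetilde{S}W(f,g)=W(\widehat{S}f,g)$ immediately. Then (\ref{sw}) follows from Moyal's identity (\ref{Moyal}) with $f'=\widehat{S}f$, $g'=g$: $\|W(\widehat{S}f,g)\|^{2}_{L^{2}(\mathbb{R}^{2n})}=\left(\tfrac{1}{2\pi\hbar}\right)^{n}\|\widehat{S}f\|^{2}_{L^{2}(\mathbb{R}^{n})}\|g\|^{2}_{L^{2}(\mathbb{R}^{n})}$, and since $\widehat{S}\in\operatorname{Mp}(n)$ is unitary, $\|\widehat{S}f\|=\|f\|$ (note the stray $i$ and missing $(2\pi\hbar)^{-n}$ in the displayed normalizations will need to be reconciled with the $(2\pi\hbar)^{n/2}$-rescaling in Lemma~\ref{lem}).

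For part (ii) the plan is to fix once and for all an orthonormal basis $(f_{j})_{j}$ of $L^{2}(\mathbb{R}^{n})$ and set $e_{jk}=(2\pi\hbar)^{n}W(f_{j},f_{k})$, which by Lemma~\ref{lem} is an orthonormal basis of $L^{2}(\mathbb{R}^{2n})$ (taking $g_{j}=f_{j}$). By part (i), $\widetilde{S}e_{jk}=(2\pi\hbar)^{n}W(\widehat{S}f_{j},f_{k})$. Now observe that $(\widehat{S}f_{j})_{j}$ is \emph{again} an orthonormal basis of $L^{2}(\mathbb{R}^{n})$ because $\widehat{S}$ is unitary; hence, applying Lemma~\ref{lem} a second time with the new basis $(\widehat{S}f_{j})_{j}$ in the first slot and $(f_{k})_{k}$ in the second, the family $(\widetilde{S}e_{jk})_{jk}=((2\pi\hbar)^{n}W(\widehat{S}f_{j},f_{k}))_{jk}$ is also an orthonormal basis of $L^{2}(\mathbb{R}^{2n})$. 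Thus $\widetilde{S}$ sends an orthonormal basis to an orthonormal basis, which forces $\widetilde{S}$ to be unitary: for $F=\sum_{jk}c_{jk}e_{jk}\in L^{2}(\mathbb{R}^{2n})$ we get $\widetilde{S}F=\sum_{jk}c_{jk}\widetilde{S}e_{jk}$ with $\|\widetilde{S}F\|^{2}=\sum_{jk}|c_{jk}|^{2}=\|F\|^{2}$, and surjectivity is immediate since the image contains a complete orthonormal system (alternatively one invokes the already-established inverse (\ref{swinvbis})). This gives (\ref{SF}) (the exponent $2$ there on $\|F\|$ is a typo — both sides should be squared, or neither).

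The main obstacle — really the only nontrivial point — is justifying the interchange of $W(\cdot,g)$ with the Bochner integral defining $\widetilde{S}$ and $\widehat{S}$, i.e. making the formal manipulation ``apply (\ref{intertikde}) under the integral sign'' rigorous, since the kernel $e^{\frac{i}{2\hbar}M_{S}z_{0}\cdot z_{0}}$ is merely an oscillatory (tempered-distribution) symbol, not integrable. The cleanest fix is to avoid the integral altogether: prove (\ref{swf}) first for the elementary generators $\widehat{V}_{-P}$, $\widehat{M}_{L,m}$, $\widehat{J}$ of $\operatorname{Mp}(n)$ from the factorization (\ref{swplq}), where the corresponding Bopp operators act by simple multiplication/rescaling and the identity $W(\widehat{T}(z_{0})f,g)=\widetilde{T}(z_{0})W(f,g)$ can be checked by direct computation on $\mathcal{S}(\mathbb{R}^{n})$, then extend to all of $\operatorname{Mp}(n)$ by multiplicativity of $\widehat{S}\mapsto\widetilde{S}$ (which itself follows from (\ref{atilde}) and the composition law for Bopp operators, or can be deduced from the group structure of $\widetilde{\operatorname{Mp}}(n)$), and finally pass from $\mathcal{S}$ to $L^{2}$ by the density and continuity already packaged in property (ii)–(iii) of the Bopp-operator theorem (relation (\ref{uffi}): $\widetilde{A}W(f,g)=W(\widehat{A}f,g)$ with $\widehat{A}=\widehat{S}$). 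Once (\ref{swf}) is secured on $L^{2}$, the rest is the purely bookkeeping argument above using Moyal's identity and Lemma~\ref{lem} twice.
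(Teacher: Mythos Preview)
Your proposal is correct and, for part (ii), essentially identical to the paper's proof: expand $F$ in the orthonormal basis $(2\pi\hbar)^{n/2}W(f_{j},g_{k})$ from Lemma~\ref{lem}, apply (\ref{swf}) termwise, note that $(\widehat{S}f_{j})_{j}$ is again an orthonormal basis by unitarity of $\widehat{S}$, and read off $\|\widetilde{S}F\|=\|F\|$.

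The one place you take a detour is part (i). The paper does not touch the Bochner integral at all: formula (\ref{swf}) is simply the specialization of the general Bopp intertwining relation (\ref{uffi}), $\widetilde{A}W(f,g)=W(\widehat{A}f,g)$, to $\widehat{A}=\widehat{S}$, and (\ref{sw}) then follows from Moyal's identity (\ref{Moyal}) and the unitarity of $\widehat{S}$. You do eventually cite (\ref{uffi}), but only as a ``fix'' for the oscillatory-integral difficulty you introduced; in fact (\ref{uffi}) \emph{is} the proof, and the integral manipulation and the detour through the generators $\widehat{V}_{-P}$, $\widehat{M}_{L,m}$, $\widehat{J}$ are unnecessary. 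Your alternative route via generators would of course work too, but it re-proves what the Bopp-operator theorem already supplies.
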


\begin{proof}
(i) Formula (\ref{swf}) follows from (\ref{afu}) taking into account
definition (\ref{uffi}). Formula (\ref{sw}) follows from Moyal's identity
(\ref{Moyal}). (ii) Let $(f_{j})_{j}$ and $(g_{j})_{j}$ be orthonormal bases
of $L^{2}(\mathbb{R}^{n})$.In view of Lemma \ref{lem} we can write
\[
F=(2\pi\hbar)^{n/2}\sum\lambda_{jk}W(f_{j},g_{k})
\]
and hence, by (\ref{swf}),
\[
\widetilde{S}F=(2\pi\hbar)^{n/2}\sum\lambda_{jk}\widetilde{S}W(f_{j}%
,g_{k}=(2\pi\hbar)^{n/2}\sum\lambda_{jk}W(\widehat{S}f_{j},g_{k}.
\]
Since $(f_{j})_{j}$ is an orthonormal basis so is $(\widehat{S}f_{j})_{j}$
(because $\widehat{S}$ is unitary) and hence%
\[
||\widetilde{S}F||_{L^{2}(\mathbb{R}^{2n})}^{2}=(2\pi\hbar)^{n}\sum
\lambda_{jk}^{2}=||F||_{L^{2}(\mathbb{R}^{2n})}^{2}%
\]
which proves formula (\ref{SF}).
\end{proof}

The operators $\widehat{S}$ can be expressed in alternative ways involving the
displacements $\widetilde{T}(z)$:

\begin{proposition}
Let $\widehat{S}\in\operatorname*{Mp}(n)$ have projection $S\in
\operatorname*{Sp}(n)$such that $\det(S-I)\neq0$. We have
\begin{equation}
\widetilde{S}=\left(  \frac{1}{2\pi\hbar}\right)  ^{n}i^{\nu}\sqrt
{|\det(S-I)|}\int_{\mathbb{R}^{2n}}e^{-\frac{i}{2\hbar}\sigma(Sz,z)}%
\widetilde{T}((S-I)z)dz \label{alfa2}%
\end{equation}
that is, as
\begin{equation}
\widehat{S}_{\nu}=\left(  \frac{1}{2\pi\hbar}\right)  ^{n}i^{\nu}\sqrt
{|\det(S-I)|}\int_{\mathbb{R}^{2n}}\widetilde{T}(Sz)\widetilde{T}%
(-z)dz\text{.} \label{alfa1}%
\end{equation}

\end{proposition}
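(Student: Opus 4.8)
The plan is to start from the established Bochner-type representation of $\widetilde{S}$ in terms of the twisted symbol, namely formula (\ref{sf}) (or more generally (\ref{sfbis})), and to transform the integral by the same substitution that was used in the $L^{2}(\mathbb{R}_{x}^{n})$ setting to pass from (\ref{s1}) to (\ref{s2}) and (\ref{s3}). Concretely, the twisted symbol of $\widehat{S}$ is $a_{\sigma}^{S}(z_{0})=\dfrac{i^{\nu}}{\sqrt{|\det(S-I)|}}\,e^{\frac{i}{2\hbar}M_{S}z_{0}\cdot z_{0}}$ with $M_{S}=\tfrac12 J(S+I)(S-I)^{-1}$, and the Bopp representation replaces $\widehat{T}(z_{0})$ by $\widetilde{T}(z_{0})$. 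So the first step is to write
\[
\widetilde{S}F(z)=\left(\tfrac{1}{2\pi\hbar}\right)^{n}\frac{i^{\nu}}{\sqrt{|\det(S-I)|}}\int_{\mathbb{R}^{2n}}e^{\frac{i}{2\hbar}M_{S}z_{0}\cdot z_{0}}\widetilde{T}(z_{0})F(z)\,dz_{0},
\]
and then perform the change of variables $z_{0}=(S-I)z$, whose Jacobian contributes a factor $|\det(S-I)|$, turning the prefactor $|\det(S-I)|^{-1/2}$ into $|\det(S-I)|^{1/2}$.

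The second step is the algebraic identity for the exponent. Under $z_{0}=(S-I)z$ one computes $M_{S}z_{0}\cdot z_{0}=\tfrac12 J(S+I)(S-I)^{-1}(S-I)z\cdot(S-I)z=\tfrac12 J(S+I)z\cdot(S-I)z$. Expanding $J(S+I)z\cdot(S-I)z = Jz\cdot(S-I)z + JSz\cdot(S-I)z$ and using $Jz\cdot Sz = \sigma(z,Sz) = -\sigma(Sz,z)$ together with the symplectic relation $S^{T}JS=J$ (so that $JSz\cdot Sz = Jz\cdot z = 0$ after the quadratic cancellation, and the surviving cross terms combine), one should land on $M_{S}z_{0}\cdot z_{0}=-\sigma(Sz,z)$ up to the standard bookkeeping; this is precisely the computation already implicit in (\ref{s2}). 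The point worth stating carefully is that the phase-space operators $\widetilde{T}$ satisfy the same composition and shift relations (\ref{a15})--(\ref{b15}) as the operators $\widehat{T}$ in (\ref{noco1})--(\ref{noco2}), so every manipulation of the exponent and of the displacement arguments that was legitimate for $\widehat{S}$ remains legitimate verbatim for $\widetilde{S}$. This yields (\ref{alfa2}).

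For the passage from (\ref{alfa2}) to (\ref{alfa1}) one uses the product rule (\ref{a15}): $\widetilde{T}(Sz)\widetilde{T}(-z)=e^{-\frac{i}{2\hbar}\sigma(Sz,-z)}\widetilde{T}(Sz-z)=e^{\frac{i}{2\hbar}\sigma(Sz,z)}\widetilde{T}((S-I)z)$, hence $e^{-\frac{i}{2\hbar}\sigma(Sz,z)}\widetilde{T}((S-I)z)=\widetilde{T}(Sz)\widetilde{T}(-z)$, and substituting into (\ref{alfa2}) gives (\ref{alfa1}) directly. All three displayed formulas in the proposition are thus obtained from the single substitution $z_{0}\mapsto(S-I)z$ (legitimate because $\det(S-I)\neq0$ makes $z\mapsto(S-I)z$ a linear automorphism of $\mathbb{R}^{2n}$) plus the elementary relation (\ref{a15}); the general-symbol case (\ref{sfbis}) gives the corresponding version with $\nu$ replaced by $\nu+\nu'+\operatorname{sign}(M_{S})$.

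The only genuine obstacle is keeping the exponent algebra straight: verifying $M_{S}(S-I)z\cdot(S-I)z = -\sigma(Sz,z)$ requires care with the transpose of $M_{S}$ (we know $M_{S}=M_{S}^{T}$ from the symmetry noted after (\ref{cayley})), with $J^{T}=-J$, and with the defining relation $S^{T}JS=J$; but since the identical identity underlies the already-proved formulas (\ref{s2})--(\ref{s3}) for metaplectic operators, one may simply invoke that computation and transport it to phase space via the formal identity of the relations obeyed by $\widehat{T}$ and $\widetilde{T}$. Everything else—the Jacobian, the convergence of the integrals in the distributional/oscillatory sense, and the bookkeeping of the index—is routine and parallels the $L^{2}(\mathbb{R}_{x}^{n})$ treatment.
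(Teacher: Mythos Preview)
Your proposal is correct and follows essentially the same route as the paper's proof: start from the Bopp representation (\ref{sf}), perform the substitution $z_{0}=(S-I)z$ (the paper simplifies the exponent slightly differently, using $M_{S}=\tfrac12 J+J(S-I)^{-1}$ and the antisymmetry of $J$ to get $M_{S}z\cdot z=\sigma((S-I)^{-1}z,z)$ before substituting, whereas you expand afterward), and then invoke the composition rule (\ref{a15}) to pass to (\ref{alfa1}). One small slip: in your application of (\ref{a15}) the sign in the intermediate exponential is reversed (it should read $\widetilde{T}(Sz)\widetilde{T}(-z)=e^{+\frac{i}{2\hbar}\sigma(Sz,-z)}\widetilde{T}((S-I)z)=e^{-\frac{i}{2\hbar}\sigma(Sz,z)}\widetilde{T}((S-I)z)$), though your final displayed identity and conclusion are correct.
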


\begin{proof}
We have
\[
\frac{1}{2}J(S+I)(S-I)^{-1}=\frac{1}{2}J+J(S-I)^{-1}%
\]
hence, in view of the antisymmetry of $J$,%
\[
M_{S}z\cdot z=J(S-I)^{-1}z\cdot z=\sigma((S-I)^{-1}z,z)
\]
Performing the change of variables $z\longmapsto(S-I)^{-1}z$ we can write%
\begin{align*}
\int_{\mathbb{R}^{2n}}e^{\frac{i}{2\hbar}M_{S}zz}\widehat{T}(z)dz  &
=\int_{\mathbb{R}^{2n}}e^{\frac{i}{2\hbar}\sigma(z,(S-I)z)}\widetilde{T}%
((S-I)z)dz\\
&  =\int_{\mathbb{R}^{2n}}e^{-\frac{i}{2\hbar}\sigma(Sz,z)}\widetilde{T}%
((S-I)z)dz
\end{align*}
hence (\ref{alfa1}). Taking into account the relation (\ref{noco1}) we have%
\[
\widetilde{T}((S-I)z)=e^{-\tfrac{i}{2\hbar}\sigma(Sz,z)}\widetilde{T}%
(Sz)\widetilde{T}(-z)
\]
and formula (\ref{alfa2}) follows.
\end{proof}

\section{Applications}

\subsection{The Feichtinger Algebra.}

The Wigner formalism allows to define an algebra of functions on configuration
space well adapted for the study of phase space quantum mechanics. This
algebra -- the Feichtinger algebra --- is usually defined in terms of the
short-time Fourier transform (STFT) \cite{Gro}, but we will rather use the
Wigner transform. (see our presentation in \cite{Birkbis}).

The Feichtinger algebra, of which we give here a simple (non-traditional)
definition is a particular case of thew more general notion of Feichtinger's
modulation spaces \cite{Hans1,Hans2,Hans3}; for a comprehensive study see
Gr\"{o}chenig's treatise \cite{Gro}. These spaces play an important role in
time-frequency analysis, but are yet underestimated in quantum mechanics.

\begin{definition}
The Feichtinger algebra $S_{0}(\mathbb{R}^{n}$ consists of all function
$\psi\in L^{2}(\mathbb{R}^{n})$ such that $W\psi\in L^{1}(\mathbb{R}^{2n})$.
\end{definition}

It is not immediately clear from this definition that $S_{0}(\mathbb{R}^{n}$
\ is a vector space. However \cite{Gro}, (\cite{Birkbis}, Ch.16):

\begin{proposition}
(i) We have $\psi\in S_{0}(\mathbb{R}^{n})$ if and only if there exists one
window $\phi$ such that $W(\psi,\phi)\in L^{1}(\mathbb{R}^{2n})$, in which
case we have $W(\psi,\phi)\in L^{1}(\mathbb{R}^{2n})$ for all windows
$\phi\phi\in\mathcal{S}(\mathbb{R}^{n})$; ; (ii) If $W(\psi,\phi)\in
L^{1}(\mathbb{R}^{2n})$ then both $\psi$ and $\phi$ are in $S_{0}%
(\mathbb{R}^{n})$; (iii) The functions $\psi\longmapsto||\psi||_{\phi,S_{0}}$
($\phi\in\mathcal{S}(\mathbb{R}^{n})$) defined by
\[
||\psi||_{\phi,S_{0}}=||W(\psi,\phi)||_{L^{1}(\mathbb{R}^{2n})}%
\]
are equivalent norms on $S_{0}(\mathbb{R}^{n}),$ which is a Banach space for
the apology thus defined. (iv) $S_{0}(\mathbb{R}^{n})$ is an algebra for both
usual (pointwise) multiplication and convolution.
\end{proposition}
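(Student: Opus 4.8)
The plan is to derive all four assertions from a handful of standard facts about the cross-Wigner transform: Moyal's identity (\ref{Moyal}), the elementary estimate $\|W(f,g)\|_{L^{\infty}}\leq(2\pi\hbar)^{-n}\|f\|_{L^{2}}\|g\|_{L^{2}}$, the symmetry $\overline{W(f,g)}=W(g,f)$, the symplectic covariance $W(\widehat{S}f,\widehat{S}g)(z)=W(f,g)(S^{-1}z)$ for $\widehat{S}\in\operatorname{Mp}(n)$, and above all the \emph{reconstruction (change of window) inequality}: there is a dimensional constant $C_{n}$ such that, for every $\phi_{0}$ with $\|\phi_{0}\|_{L^{2}}=1$,
\[
|W(\psi,\phi)(z)|\leq C_{n}\,\bigl(|W(\psi,\phi_{0})|\ast|W(\phi_{0},\phi)|\bigr)(z).
\]
This comes from the resolution of the identity $\psi=(2\pi\hbar)^{-n}\int\langle\psi,\widehat{T}(z_{0})\phi_{0}\rangle\,\widehat{T}(z_{0})\phi_{0}\,dz_{0}$ together with the way a Heisenberg--Weyl shift of one argument of $W$ translates $W$ (modulo a unimodular chirp that disappears upon taking moduli), followed by Young's inequality $L^{1}\ast L^{1}\subset L^{1}$. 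One further, non-elementary ingredient is needed in (i), namely that the intrinsic condition $W\psi\in L^{1}(\mathbb{R}^{2n})$ is equivalent to $W(\psi,\varphi)\in L^{1}(\mathbb{R}^{2n})$ for one fixed Gaussian window $\varphi$ (equivalently, to $\psi$ lying in the modulation space $M^{1}$); this is proved in \cite{Birkbis}, Ch.~16, and \cite{Gro}, and it cannot be obtained from the reconstruction inequality alone, since that inequality always reintroduces the very quantity to be bounded.

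Parts (i) and (ii). Assume $W(\psi,\phi)\in L^{1}$ for a single nonzero window $\phi$, and $\psi\neq0$ (the cases $\psi=0$ or $\phi=0$ being trivial). By the symmetry above, $W(\phi,\psi)=\overline{W(\psi,\phi)}\in L^{1}$. Applying the reconstruction inequality with reference window $\psi/\|\psi\|_{L^{2}}$ gives $|W(\phi,\phi)|\leq C\,|W(\phi,\psi)|\ast|W(\psi,\phi)|\in L^{1}\ast L^{1}=L^{1}$, so $\phi\in S_{0}(\mathbb{R}^{n})$, and symmetrically (reference window $\phi/\|\phi\|_{L^{2}}$) also $\psi\in S_{0}(\mathbb{R}^{n})$; this is (ii). In particular $W(\psi,\psi)\in L^{1}$, and the non-elementary fact recalled above puts $\psi$ in $M^{1}$, so $W(\psi,\varphi)\in L^{1}$; applying the reconstruction inequality once more with reference window $\varphi$ yields, for every $\phi_{0}\in\mathcal{S}(\mathbb{R}^{n})$, $|W(\psi,\phi_{0})|\leq C_{n}\,|W(\psi,\varphi)|\ast|W(\varphi,\phi_{0})|$ with $W(\varphi,\phi_{0})\in\mathcal{S}(\mathbb{R}^{2n})\subset L^{1}$, hence $W(\psi,\phi_{0})\in L^{1}$ for all Schwartz windows. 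This is (i).

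Part (iii). Norm equivalence: the reconstruction inequality (reference window $\phi_{0}/\|\phi_{0}\|_{L^{2}}$) and Young's inequality give $\|W(\psi,\phi_{1})\|_{L^{1}}\leq C\,\|W(\phi_{0},\phi_{1})\|_{L^{1}}\,\|W(\psi,\phi_{0})\|_{L^{1}}$, and interchanging $\phi_{0}$ and $\phi_{1}$ gives the reverse bound; since both windows are Schwartz, $\|W(\phi_{0},\phi_{1})\|_{L^{1}}<\infty$, which is the asserted equivalence of $\|\cdot\|_{\phi_{0},S_{0}}$ and $\|\cdot\|_{\phi_{1},S_{0}}$. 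Each $\|\cdot\|_{\phi,S_{0}}$ is a norm because $\psi\mapsto W(\psi,\phi)$ is linear and, by Moyal's identity, $\|W(\psi,\phi)\|_{L^{2}}^{2}=(2\pi\hbar)^{-n}\|\psi\|_{L^{2}}^{2}\|\phi\|_{L^{2}}^{2}$, so $W(\psi,\phi)=0$ forces $\psi=0$. For completeness I would first record the continuous inclusion $S_{0}(\mathbb{R}^{n})\hookrightarrow L^{2}(\mathbb{R}^{n})$: combining $\|W(\psi,\phi)\|_{L^{2}}^{2}\leq\|W(\psi,\phi)\|_{L^{1}}\|W(\psi,\phi)\|_{L^{\infty}}$, the $L^{\infty}$-estimate and Moyal's identity gives $\|\psi\|_{L^{2}}\|\phi\|_{L^{2}}\leq\|W(\psi,\phi)\|_{L^{1}}$. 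Hence a sequence Cauchy in $\|\cdot\|_{\phi,S_{0}}$ is Cauchy in $L^{2}$, with $L^{2}$-limit $\psi$; since $W(\cdot,\phi)$ is continuous on $L^{2}$ while $(W(\psi_{k},\phi))_{k}$ is also Cauchy in $L^{1}$, the two limits agree, so $W(\psi,\phi)\in L^{1}$, $\psi\in S_{0}(\mathbb{R}^{n})$, and $\psi_{k}\to\psi$ in $\|\cdot\|_{\phi,S_{0}}$; thus $S_{0}(\mathbb{R}^{n})$ is complete.

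Part (iv). First, $S_{0}(\mathbb{R}^{n})$ is Fourier invariant: taking $\widehat{S}=\widehat{J}$ in the covariance relation and using $|W(cf,cg)|=|W(f,g)|$ for unimodular $c$ gives $W(Ff,Fg)(z)=W(f,g)(-Jz)$, so $\|W(F\psi,F\phi)\|_{L^{1}}=\|W(\psi,\phi)\|_{L^{1}}$ (unit Jacobian) and $F\phi\in\mathcal{S}(\mathbb{R}^{n})$ for $\phi\in\mathcal{S}(\mathbb{R}^{n})$; with the convolution theorem this shows that closure of $S_{0}(\mathbb{R}^{n})$ under pointwise multiplication and under convolution are equivalent, so it suffices to treat multiplication. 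The tool is the partial convolution identity: for $\phi_{1},\phi_{2}\in\mathcal{S}(\mathbb{R}^{n})$ with $\phi_{1}\phi_{2}\neq0$,
\[
W(\psi_{1}\psi_{2},\phi_{1}\phi_{2})(x,p)=\int_{\mathbb{R}^{n}}W(\psi_{1},\phi_{1})(x,p-q)\,W(\psi_{2},\phi_{2})(x,q)\,dq,
\]
obtained by recognizing both sides as Fourier transforms in the variable $y$ of $\psi_{j}(x+\tfrac{1}{2}y)\overline{\phi_{j}(x-\tfrac{1}{2}y)}$ and of their product. Young's inequality in $p$ then gives $\|W(\psi_{1}\psi_{2},\phi_{1}\phi_{2})\|_{L^{1}(\mathbb{R}^{2n})}\leq\int_{\mathbb{R}^{n}}\|W(\psi_{1},\phi_{1})(x,\cdot)\|_{L^{1}_{p}}\,\|W(\psi_{2},\phi_{2})(x,\cdot)\|_{L^{1}_{p}}\,dx$, and since $\int_{\mathbb{R}^{n}}\|W(\psi_{1},\phi_{1})(x,\cdot)\|_{L^{1}_{p}}\,dx=\|W(\psi_{1},\phi_{1})\|_{L^{1}(\mathbb{R}^{2n})}<\infty$, the remaining step is the uniform bound $\sup_{x}\|W(\psi_{2},\phi_{2})(x,\cdot)\|_{L^{1}_{p}}<\infty$. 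This last estimate is the real obstacle: the quantity equals, up to a constant, the Fourier-algebra ($\mathcal{F}L^{1}$) norm of $y\mapsto\psi_{2}(x+\tfrac{1}{2}y)\overline{\phi_{2}(x-\tfrac{1}{2}y)}$, which is controlled uniformly in $x$ by $\|\psi_{2}\|_{\mathcal{F}L^{1}}\|\phi_{2}\|_{\mathcal{F}L^{1}}$ because $\mathcal{F}L^{1}$ is a pointwise Banach algebra on which translation acts isometrically and dilation boundedly --- but this needs the inclusion $S_{0}(\mathbb{R}^{n})\subset\mathcal{F}L^{1}(\mathbb{R}^{n})$, i.e. the amalgam/minimal-Banach-space structure of $S_{0}(\mathbb{R}^{n})$, which does not follow from the elementary identities above. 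Here I would again invoke \cite{Gro} and \cite{Birkbis}, Ch.~16. Granting it, the two displayed estimates yield $\psi_{1}\psi_{2}\in S_{0}(\mathbb{R}^{n})$ with a product bound on the norms, and Fourier invariance transfers everything to convolution, completing (iv).
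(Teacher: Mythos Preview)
The paper does not prove this proposition at all: it simply prefaces the statement with a citation to Gr\"ochenig's book \cite{Gro} and to \cite{Birkbis}, Ch.~16, and moves on. So there is no ``paper's own proof'' to compare against; what you have written is already far more than the paper offers.

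Your sketch is essentially the standard modulation-space argument one finds in those references, organized around the change-of-window (reconstruction) inequality and Young's inequality. The logic is sound, and you have been honest about the two places where a genuinely nontrivial external fact is needed: the equivalence between the intrinsic definition $W\psi\in L^{1}$ and the windowed one $W(\psi,\varphi)\in L^{1}$ for a fixed Gaussian $\varphi$ (this is the heart of the Feichtinger--Gr\"ochenig theory and cannot be bootstrapped from the reconstruction inequality alone, exactly as you note), and the inclusion $S_{0}\subset\mathcal{F}L^{1}$ used to close the algebra estimate in~(iv). Since the paper itself defers entirely to \cite{Gro} and \cite{Birkbis} for these points, your invocation of the same sources at precisely these steps is consistent with what the paper does.

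One small remark on~(ii): when you apply the reconstruction inequality with reference window $\psi/\|\psi\|_{L^{2}}$, you are tacitly assuming $\psi\in L^{2}(\mathbb{R}^{n})$ so that the weak resolution of the identity is available. This is harmless in context (the cross-Wigner $W(\psi,\phi)$ is being taken for $L^{2}$ functions throughout), but it is worth stating, since the proposition as phrased does not make the ambient space of $\psi$ and $\phi$ explicit.
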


We have the inclusions%
\begin{equation}
\mathcal{S}(\mathbb{R}^{n})\subset S_{0}(\mathbb{R}^{n})\subset C^{0}%
(\mathbb{R}^{n})\cap L^{1}(\mathbb{R}^{n})\cap L^{2}(\mathbb{R}^{n}).
\label{inclo}%
\end{equation}

\begin{proposition}
\label{169}Let $\psi\in S_{0}(\mathbb{R}^{n})$. We have (i) $\widehat{S}%
\psi\in S_{0}(\mathbb{R}^{n})$ for every $\widehat{S}\in\operatorname*{Mp}%
(n)$,; (ii) $\widehat{T}(z_{0})\psi\in S_{0}(\mathbb{R}^{n})$ for every
$z_{0}\in\mathbb{R}^{2n}$ . (iii) We have $\lim_{|x|\rightarrow\infty}\psi=0$
hence $\psi$ is bounded.
\end{proposition}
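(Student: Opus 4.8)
The plan is to reduce all three statements to the $L^{1}$-integrability criterion for membership in $S_{0}(\mathbb{R}^{n})$ furnished by the preceding Proposition, using the covariance properties of the (cross-)Wigner transform together with a reconstruction (resolution of identity) formula. Throughout I fix a window $\phi\in\mathcal{S}(\mathbb{R}^{n})$, so that by hypothesis $W(\psi,\phi)\in L^{1}(\mathbb{R}^{2n})$, and I recall that $\widehat{T}(z_{0})$ acts isometrically on $L^{\infty}$ and maps $C_{0}(\mathbb{R}^{n})$ (continuous functions vanishing at infinity, sup norm) into itself with $\|\widehat{T}(z_{0})\phi\|_{\infty}=\|\phi\|_{\infty}$.

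For (ii) I would apply the intertwining relation (\ref{intertikde}) to get $W(\widehat{T}(z_{0})\psi,\phi)=\widetilde{T}(z_{0})W(\psi,\phi)$. Since $\widetilde{T}(z_{0})G(z)=e^{-\frac{i}{\hbar}\sigma(z,z_{0})}G(z-\tfrac{1}{2}z_{0})$ is a unimodular factor times a translation, it is an isometry of $L^{1}(\mathbb{R}^{2n})$; hence $W(\widehat{T}(z_{0})\psi,\phi)\in L^{1}(\mathbb{R}^{2n})$ and $\widehat{T}(z_{0})\psi\in S_{0}(\mathbb{R}^{n})$ by the preceding Proposition, with $\|\widehat{T}(z_{0})\psi\|_{\phi,S_{0}}=\|\psi\|_{\phi,S_{0}}$. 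For (i) (which requires no condition on $S$), the quickest route is the symplectic covariance of the Wigner transform, $W(\widehat{S}\psi)(z)=W\psi(S^{-1}z)$ with $S=\pi^{\operatorname{Mp}}(\widehat{S})$, so that $\|W(\widehat{S}\psi)\|_{L^{1}}=\|W\psi\|_{L^{1}}$ because $|\det S|=1$. If one prefers to stay entirely within the results of this section, one can instead use the factorization (\ref{swplq}) (and Theorem \ref{Thm1}(i) to handle a general $\widehat{S}$) and check the three elementary generators separately: $W(\widehat{V}_{-P}\psi)$, $W(\widehat{M}_{L,m}\psi)$, $W(\widehat{J}\psi)$ are each $W\psi$ composed with a linear symplectic substitution of phase-space variables of Jacobian $1$ (a shear, the map $(x,p)\mapsto(Lx,(L^{T})^{-1}p)$, and a rotation, respectively), so each preserves $\|W(\cdot)\|_{L^{1}}$. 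In all cases $W(\widehat{S}\psi)\in L^{1}(\mathbb{R}^{2n})$, hence $\widehat{S}\psi\in S_{0}(\mathbb{R}^{n})$.

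For (iii) I would invoke a reconstruction formula expressing $\psi$, for a suitably normalized $\phi\in\mathcal{S}(\mathbb{R}^{n})$, as the Bochner integral $\psi=c\int_{\mathbb{R}^{2n}}W(\psi,\phi)(z_{0})\,\widehat{T}(z_{0})\phi\,dz_{0}$, with $c$ depending only on $\phi$ and $\hbar$. Each integrand value $\widehat{T}(z_{0})\phi$ lies in $C_{0}(\mathbb{R}^{n})$ with sup norm $\|\phi\|_{\infty}$ independent of $z_{0}$, and $z_{0}\mapsto\widehat{T}(z_{0})\phi$ is continuous from $\mathbb{R}^{2n}$ into $C_{0}(\mathbb{R}^{n})$ (uniform continuity and decay of $\phi$); since $W(\psi,\phi)\in L^{1}(\mathbb{R}^{2n})$, the integral converges absolutely in the Banach space $C_{0}(\mathbb{R}^{n})$. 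Therefore $\psi\in C_{0}(\mathbb{R}^{n})$: it is continuous, $\lim_{|x|\to\infty}\psi(x)=0$, and $\|\psi\|_{\infty}\le|c|\,\|\phi\|_{\infty}\,\|W(\psi,\phi)\|_{L^{1}}<\infty$; combined with (\ref{inclo}) this also gives the stated boundedness.

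The routine parts are (i) and (ii), which amount to bookkeeping once the covariance formulas are in hand. I expect the only genuine obstacle to be (iii): one must have the reconstruction formula available and, crucially, must observe that the continuous superposition of the translated–modulated windows $\widehat{T}(z_{0})\phi$ converges in \emph{sup} norm (so that the limit inherits continuity and vanishing at infinity), not merely in $L^{2}$. An acceptable shortcut is to cite the continuous embedding $S_{0}(\mathbb{R}^{n})\hookrightarrow C_{0}(\mathbb{R}^{n})$ already signalled by (\ref{inclo}) and in \cite{Gro,Birkbis}, but a self-contained proof of that embedding reduces to exactly the Bochner-integral argument just described.
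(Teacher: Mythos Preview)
Your arguments for (i) and (ii) are correct and essentially coincide with the paper's: the paper uses the auto-Wigner covariance $W(\widehat{S}\psi)=W\psi\circ S^{-1}$ and the translation property $W(\widehat{T}(z_{0})\psi)(z)=W\psi(z-z_{0})$, while you use the cross-Wigner versions, but the content is identical (composition with a unimodular affine map of phase space preserves $L^{1}$).

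For (iii) you take a genuinely different route. You reconstruct $\psi$ as a Bochner integral of Heisenberg--Weyl shifts of a Schwartz window against the $L^{1}$-weight $W(\psi,\phi)$ and observe that this integral converges in $C_{0}(\mathbb{R}^{n})$. This is correct (it is essentially the STFT inversion formula rewritten in Wigner coordinates), and it has the advantage of giving directly the quantitative bound $\|\psi\|_{\infty}\le C\|\psi\|_{\phi,S_{0}}$. The paper instead bootstraps from (i): since $\widehat{J}\in\operatorname{Mp}(n)$, part (i) gives $F^{-1}\psi\in S_{0}(\mathbb{R}^{n})\subset L^{1}(\mathbb{R}^{n})$ by (\ref{inclo}), and then $\psi=F(F^{-1}\psi)$ vanishes at infinity by the Riemann--Lebesgue lemma. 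The paper's argument is shorter and uses only the already stated inclusion $S_{0}\subset L^{1}$ together with the metaplectic invariance just proved, whereas your argument is self-contained but requires importing (and checking the precise form of) the reconstruction formula.
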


\begin{proof}
(Cf. \cite{Birkbis}, Ch. 16). (i) We have $\psi\in S_{0}(\mathbb{R}^{n})$ if
and only $\psi\in L^{2}(\mathbb{R}^{n})$ and $W\psi\in L^{1}(\mathbb{R}^{2n}%
)$. The property follows from the covariance relation $W(\widehat{S}%
\psi)=W\psi\circ S^{-1}$ where $S\in\operatorname*{Sp}(n)$ is the projection
of $\widehat{S}$. (ii) Follows similarly from the translation property
$W(\widehat{T}(z_{0})\psi)=W\psi(z-z_{0})$. (iii) \ Since $\psi$ is continuous
it boundedness follows from $\lim_{z\rightarrow\infty}\psi=0$. Since
$S_{0}(\mathbb{R}^{n})$ is invariant by Fourier transform in view of (i) , we
have $F^{-1}\psi\in S_{0}(\mathbb{R}^{n})$; now $S_{0}(\mathbb{R}^{n})\subset
L^{1}(\mathbb{R}^{n})$ hence $\psi=F(F^{-1}\psi)$ has limit $0$ at infinity in
view of Riemann--Lebesgue's lemma.
\end{proof}

The following result describes a characterization of the Feichtinger algebra
in terms of the phase space metaplectic operators:

\begin{proposition}
(i) We have $f\in S_{0}(\mathbb{R}^{n})$ if and only if $\widetilde{S}%
W(f,g)\in L^{1}(\mathbb{R}^{2n})$ \ for some $\widetilde{S}$;(ii=) when this
is the case we have $\widetilde{S}W(f,g)\in L^{1}(\mathbb{R}^{2n})$ for all
$\widetilde{S}$.
\end{proposition}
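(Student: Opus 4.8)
The plan is to transport everything through the intertwining identity (\ref{swf}), $\widetilde{S}W(f,g)=W(\widehat{S}f,g)$, so that the statement collapses to two facts already in hand: that $\operatorname{Mp}(n)$ preserves the Feichtinger algebra (Proposition \ref{169}(i)), and that membership in $S_{0}(\mathbb{R}^{n})$ is detected by a single window while, conversely, $W(\psi,\phi)\in L^{1}$ forces $\psi,\phi\in S_{0}$ (the Proposition preceding Proposition \ref{169}). Throughout I fix a nonzero window $g\in\mathcal{S}(\mathbb{R}^{n})$, and I read the statement in the setting $f\in L^{2}(\mathbb{R}^{n})$ so that $W(f,g)$ is a genuine $L^{2}$ function.

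First I would dispatch the ``only if'' part of (i) together with all of (ii). Assume $f\in S_{0}(\mathbb{R}^{n})$ and let $\widetilde{S}\in\widetilde{\operatorname{Mp}}(n)$ be arbitrary; writing $\widetilde{S}$ as a product of generators $\widetilde{S}_{W_{j},m_{j}}$ and setting $\widehat{S}=\prod_{j}\widehat{S}_{W_{j},m_{j}}\in\operatorname{Mp}(n)$, iterating (\ref{swf}) over the factors yields $\widetilde{S}W(f,g)=W(\widehat{S}f,g)$. By Proposition \ref{169}(i) we have $\widehat{S}f\in S_{0}(\mathbb{R}^{n})$, hence $W(\widehat{S}f,g)\in L^{1}(\mathbb{R}^{2n})$ by the one-window characterization of $S_{0}$. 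Thus $\widetilde{S}W(f,g)\in L^{1}(\mathbb{R}^{2n})$ for \emph{every} $\widetilde{S}$, which is precisely (ii) and, a fortiori, the forward implication of (i).

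For the converse in (i), suppose $\widetilde{S}W(f,g)\in L^{1}(\mathbb{R}^{2n})$ for some $\widetilde{S}\in\widetilde{\operatorname{Mp}}(n)$. Since $\widehat{S}$ is unitary on $L^{2}(\mathbb{R}^{n})$ we have $\widehat{S}f\in L^{2}(\mathbb{R}^{n})$, and (\ref{swf}) gives $W(\widehat{S}f,g)=\widetilde{S}W(f,g)\in L^{1}(\mathbb{R}^{2n})$; by the part of the characterization Proposition asserting that $W(\psi,\phi)\in L^{1}$ forces $\psi,\phi\in S_{0}$, we get $\widehat{S}f\in S_{0}(\mathbb{R}^{n})$. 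Finally, since $\widehat{S}^{-1}\in\operatorname{Mp}(n)$ and $S_{0}(\mathbb{R}^{n})$ is $\operatorname{Mp}(n)$-invariant (Proposition \ref{169}(i) again), $f=\widehat{S}^{-1}(\widehat{S}f)\in S_{0}(\mathbb{R}^{n})$, which finishes the proof.

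The one point deserving explicit care — the only thing beyond a direct citation of earlier results — is the claim used in the first step that the intertwining relation (\ref{swf}), stated in the excerpt for a single $\widehat{S}$ with $\det(S-I)\neq0$, propagates to all of $\widetilde{\operatorname{Mp}}(n)$. This reduces to the multiplicativity check: if $\widetilde{S}_{1}W(f,g)=W(\widehat{S}_{1}f,g)$ and $\widetilde{S}_{2}W(f,g)=W(\widehat{S}_{2}f,g)$ then $\widetilde{S}_{1}\widetilde{S}_{2}W(f,g)=\widetilde{S}_{1}W(\widehat{S}_{2}f,g)=W(\widehat{S}_{1}\widehat{S}_{2}f,g)$, where one uses that $\widehat{S}_{2}f\in L^{2}(\mathbb{R}^{n})$ so that (\ref{swf}) applies once more; an induction on the number of generators then gives the claim. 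It is worth recording, since the operators $\widetilde{S}_{W,m}$ were a priori defined only for free $S_{W}$, whereas $\widetilde{\operatorname{Mp}}(n)$ is the group they generate. Everything else is routine.
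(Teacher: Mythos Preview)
Your proof is correct and follows essentially the same route as the paper: both arguments reduce everything to the intertwining identity $\widetilde{S}W(f,g)=W(\widehat{S}f,g)$ and then invoke the $\operatorname{Mp}(n)$-invariance of $S_{0}(\mathbb{R}^{n})$ together with the one-window characterization of the Feichtinger algebra. Your write-up is in fact more careful than the paper's, which does not spell out the propagation of (\ref{swf}) from the generators $\widetilde{S}_{W,m}$ to arbitrary products in $\widetilde{\operatorname{Mp}}(n)$.
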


\begin{proof}
(i) Assume that $f\in S_{0}(\mathbb{R}^{n})$, then $\widetilde{S}%
W(f,g)=W(\overline{S}f,g)\in L^{1}(\mathbb{R}^{2n})$. If, conversely,
$\widetilde{S}W(f,g)=\in L^{1}(\mathbb{R}^{2n})$ then $\overline{S}f\in
S_{0}(\mathbb{R}^{n})$ and hence $f\in S_{0}(\mathbb{R}^{n})$ in view of the
metaplectic invariance of the Feichtinger algebra. (ii) is clear in view of
the argument above.
\end{proof}

\subsubsection{Asymptotics for $\hbar\rightarrow0$}

Recall the method of stationary phase \cite{Fedo,Leray}: let Let $a\in
C_{0}^{\infty}({\mathbb{R}}^{n},{\mathbb{C)}}$ and $\varphi\in a\in
C_{0}^{\infty}({\mathbb{R}}^{n},{\mathbb{R)}}$. We assume that $\varphi$ has
only non-degenerate critical points in $\operatorname{supp}(a)$,
\textit{i.e}.,
\[
\partial_{x}\varphi(x_{c})=0\quad{\text{and}}\quad\det\operatorname*{Hess}%
(x_{c})\neq0
\]
($\operatorname*{Hess}(x_{c})$ the Hessian matrix at $x_{c}$). Consider the
integral
\[
I(\lambda)=\int_{{\mathbb{R}}^{n}}e^{i\lambda\varphi(x)}a(x)\,dx,\qquad
\lambda>0.
\]
For $\lambda\rightarrow+\infty$ we have the asymptote approximation:
\begin{equation}
I(\lambda)=(2\pi/\lambda)^{n/2}\sum_{x_{c}\in{\mathrm{Crit}}(\varphi
)}e^{i\lambda\varphi(x_{c})}\,e^{i\frac{\pi}{4}\operatorname*{sign}%
\operatorname*{Hess}(x_{c}))}\,\frac{a(x_{c})}{|\det\operatorname*{Hess}%
\varphi(x_{c})|^{1/2}}+O(\lambda^{-n/2-1}).\label{phastat}%
\end{equation}
Let us set $\lambda=1/\hbar$ and apply this formula to the formula
\begin{equation}
\widetilde{S}F(z)=\left(  \frac{1}{2\pi\hbar}\right)  ^{n}\frac{i^{\nu}}%
{\sqrt{|\det(S-I)|}}\int_{\mathbb{R}^{2n}}e^{\frac{i}{2\hbar}M_{S}z_{0}\cdot
z_{0}}\widetilde{T}(z_{0})F(z)dz_{0}%
\end{equation}
with $F\in C_{0}^{\infty}({\mathbb{R}}^{n},{\mathbb{C)}}$.

\begin{theorem}
For $\det(S-I)\det(S-I)\neq0$ We have the following asymptotic expression of
$\widetilde{S}F(z)$for $\hbar\rightarrow0$:%
\begin{multline}
\widetilde{S}F(z)=\frac{^{2^{-n}}i^{\nu-\frac{1}{2}\operatorname*{sign}M_{S}}%
}{\det(S-I)\sqrt{\det(S+I)}}\label{asympt}\\
\times\exp e\left(  \frac{i}{2\hbar}M_{-S}z\cdot z_{J}\right)  F(z-M_{S}%
^{-1}Jz)++O(\hbar^{n}).
\end{multline}

\end{theorem}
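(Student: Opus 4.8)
The plan is to read the Bochner representation (\ref{sf}) of $\widetilde{S}$ as an oscillatory integral in the $2n$ variables $z_{0}$ with large parameter $\lambda=1/\hbar$, and to apply the stationary phase formula (\ref{phastat}). Inserting the definition $\widetilde{T}(z_{0})F(z)=e^{-\frac{i}{\hbar}\sigma(z,z_{0})}F(z-\tfrac12 z_{0})$ into (\ref{sf}) gives, for each fixed $z$,
\[
\widetilde{S}F(z)=\left(\tfrac{1}{2\pi\hbar}\right)^{n}\frac{i^{\nu}}{\sqrt{|\det(S-I)|}}\int_{\mathbb{R}^{2n}}e^{\frac{i}{\hbar}\varphi_{z}(z_{0})}F(z-\tfrac12 z_{0})\,dz_{0},\qquad \varphi_{z}(z_{0})=\tfrac12 M_{S}z_{0}\cdot z_{0}-\sigma(z,z_{0}).
\]
The phase $\varphi_{z}$ is a real quadratic polynomial in $z_{0}$ (real since $M_{S}=M_{S}^{T}$ and $\sigma$ is a real bilinear form), and $z_{0}\mapsto F(z-\tfrac12 z_{0})$ lies in $C_{0}^{\infty}(\mathbb{R}^{2n})$ because $F$ does, so (\ref{phastat}) applies pointwise in $z$.

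Next comes the critical point analysis. Since $M_{S}$ is symmetric, $\nabla_{z_{0}}\varphi_{z}(z_{0})=M_{S}z_{0}-Jz$, so $\varphi_{z}$ has the unique critical point $z_{0}^{c}=M_{S}^{-1}Jz$, with constant Hessian $\operatorname{Hess}\varphi_{z}=M_{S}$. From (\ref{cayley}) and $\det J=1$ one gets $\det M_{S}=2^{-2n}\det(S+I)/\det(S-I)$, so $M_{S}$ is invertible and the critical point nondegenerate precisely when $\det(S+I)\neq0$ as well; this is why the nondegeneracy condition on $S$ must involve both $\det(S-I)\neq0$ and $\det(S+I)\neq0$. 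Using $M_{S}z_{0}^{c}=Jz$ and the antisymmetry of $J$,
\[
\varphi_{z}(z_{0}^{c})=-\tfrac12\sigma(z,z_{0}^{c})=-\tfrac12\langle J^{T}M_{S}^{-1}Jz,z\rangle,
\]
and an elementary computation with $J^{T}=-J$ and the identities (\ref{cayley})--(\ref{minv}) (noting $-S\in\operatorname{Sp}(n)$) rewrites this critical value as the quadratic form in $z$ built from the Cayley transform $M_{-S}$, producing the exponential factor $\exp(\tfrac{i}{2\hbar}M_{-S}z\cdot z)$ of (\ref{asympt}); the amplitude at $z_{0}^{c}$ equals $F$ evaluated at $z-\tfrac12 M_{S}^{-1}Jz$, the translate of $F$ entering (\ref{asympt}).

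It then remains to substitute into (\ref{phastat}) with $2n$ variables: the single critical point contributes $(2\pi/\lambda)^{n}e^{\frac{i}{\hbar}\varphi_{z}(z_{0}^{c})}e^{i\frac{\pi}{4}\operatorname{sign}M_{S}}|\det M_{S}|^{-1/2}F(z-\tfrac12 M_{S}^{-1}Jz)+O(\lambda^{-n-1})$. The prefactor $(2\pi\hbar)^{-n}$ cancels $(2\pi/\lambda)^{n}$; the factor $|\det M_{S}|^{-1/2}=2^{n}|\det(S-I)|^{1/2}|\det(S+I)|^{-1/2}$ combines with the $|\det(S-I)|^{-1/2}$ from (\ref{sf}) to give the power-of-two, $\det(S-I)$ and $\sqrt{\det(S+I)}$ factors of (\ref{asympt}); and $e^{i\frac{\pi}{4}\operatorname{sign}M_{S}}=i^{\frac12\operatorname{sign}M_{S}}$ combines with $i^{\nu}$ into the phase $i^{\nu-\frac12\operatorname{sign}M_{S}}$ (after tracking the signs). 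The $O(\lambda^{-n-1})$ remainder measured against the prefactor $\lambda^{n}$ leaves the error term, and (\ref{asympt}) follows.

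The analytic content is only the classical stationary phase lemma (\ref{phastat}); the real work---and the main obstacle---is the bookkeeping: correctly reconciling the Conley--Zehnder phase $i^{\nu}$ supplied by Theorem \ref{Thm2} with the Morse-index phase $e^{i\frac{\pi}{4}\operatorname{sign}\operatorname{Hess}\varphi_{z}}=e^{i\frac{\pi}{4}\operatorname{sign}M_{S}}$ produced by stationary phase, and tracking all determinant and power-of-two factors through the Cayley-transform identities (\ref{cayley})--(\ref{minv}) relating $\det(S\pm I)$, $\det M_{S}$ and $M_{-S}$; in particular one must use $\det(S+I)\neq0$ so that $M_{S}$ is invertible. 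A minor point worth recording is that the expansion is uniform in $z$: if the critical point $z_{0}^{c}=M_{S}^{-1}Jz$ falls outside the support of the amplitude, the leading term simply vanishes, consistently with $F(z-\tfrac12 M_{S}^{-1}Jz)=0$ there.
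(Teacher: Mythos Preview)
Your approach is exactly the paper's: write $\widetilde{S}F(z)$ as an oscillatory integral in $z_{0}$ with phase $\varphi_{z}(z_{0})=\tfrac12 M_{S}z_{0}\cdot z_{0}-Jz\cdot z_{0}$, locate the unique critical point $z_{0}^{c}=M_{S}^{-1}Jz$ with Hessian $M_{S}$, identify the critical value with the $M_{-S}$ quadratic form via the Cayley-transform identities, and read off (\ref{asympt}) from (\ref{phastat}). Your bookkeeping is in fact more careful than the paper's in several places---you correctly get $\det M_{S}=2^{-2n}\det(S+I)/\det(S-I)$, the amplitude argument $z-\tfrac12 M_{S}^{-1}Jz$, and you rightly flag the need for $\det(S+I)\neq0$; the discrepancies with the displayed formula (\ref{asympt}) (the missing $\tfrac12$, the power of $2$, the sign on $\tfrac12\operatorname{sign}M_{S}$) are typos in the statement rather than errors on your side, so your hedge ``after tracking the signs'' is unnecessary.
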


\begin{proof}
Replacing $x$ with $z$ and $n$ with $2n$ we have%
\[
\varphi(z_{0})=\frac{1}{2}M_{S}z_{0}\cdot z_{0}-Jz\cdot z_{0}%
\]
hence $\partial_{z}\varphi(z_{c})=0$ if and only if
\[
z_{c}=M_{S}^{-1}Jz=2/S-I)(S+I)^{-1}%
\]
and hence%
\begin{align*}
\varphi(z_{c})  & =\frac{1}{2}M_{S}(M_{S}^{-1}J)z\cdot M_{S}^{-1}J-Jz\\
& =\frac{1}{2}Jz\cdot M_{S}^{-1}Jz-Jz\cdot M_{S}^{-1}z\\
& =\frac{1}{2}JM_{S}^{-1}Jzz\\
& =2M_{-S}%
\end{align*}
We have $\operatorname*{Hess}(\varphi(z_{c})=M_{S}$. hence
\begin{equation}
\det\operatorname*{Hess}((\varphi(z_{c}))=\det M_{S}=2^{-n}\det(S+I)\det
r(S-I)^{-1}\label{dethessian}%
\end{equation}
Collecting all these results, a straightforward calculation leads to formula
(\ref{asympt}).
\end{proof}

\begin{acknowledgement}
This work has been financed by the Austrian Research Foundation FWF (Quantum
Austria PAT 2056623).
\end{acknowledgement}

\end{document}